\begin{document}





\title{Monotonicity, Revenue Equivalence and Budgets}
\author{Ahuva Mu'alem\inst{1}}

\institute{Software Engineering Dept. Ort Braude College of Engineering, Karmiel, Israel. ahumu@yahoo.com
A significant part of this research was done while the author was 
a Post-Doctoral Fellow at the Social and Information Sciences Laboratory, Caltech.}

\maketitle
\begin{abstract}
We study multidimensional mechanism design in a common scenario 
where players have private information about 
their willingness to pay and their ability to pay.
We provide a complete characterization of dominant-strategy 
incentive-compatible direct mechanisms where over-reporting the budget is not possible. In several settings,  reporting larger budgets 
can be made suboptimal 
with a small randomized modification 
to the payments. 

We then derive a closely related partial characterization for the general case where players can arbitrarily misreport their private budgets. Immediate applications of these results include simple
 characterizations for mechanisms with publicly-known budgets and for mechanisms without monetary transfers.

The celebrated revenue equivalence theorem states that the seller's revenue for a broad class of standard auction formats and settings will
be the same in equilibrium. 
Our main application is a revenue equivalence theorem  
for financially constrained bidders.
\end{abstract}

\section{Introduction}

Budget constraints are central to big business 
auctions.~\cite{Cramton-95} states that it realistic to assume that all firms participating in the historic PCS telecommunications spectrum license auction, held on July 1994  by the Federal Communications Commission (FCC), were faced with budget constraints. In Google's  GSP  keyword auction and other search engine advertising  platforms, the bidders are required to specify their bids as well as 
their budget limits~(\cite{Ostrovsky},\cite{Varian} and~\cite{AMPP}).

Classical results in mechanism design literature do not  necessarily carry over to common scenarios 
with budget constraints.~\cite{Milgrom-02} showed 
that the dominant-strategy of the VCG auction breaks down when bidders have limited budgets.\footnote{See 
also~\cite{Nicole-Budget},~\cite{Lavi-May} and references therein.} 
~\cite{Che-Gale} showed a revenue dominance of the standard first-price auction over second-price auction in the presence of budget constraints.  On the positive side, several incentive-compatible auctions do carry over.
The classical multiple object ascending 
auction for unit-demand bidders 
due to~\cite*{DGS} generalizes so as 
to accommodate incentive-compatible auctions for financially constrained bidders~(\cite{Hatfield-Milgrom} and~\cite{AMPP}).
Recently,~\cite{Ron-Shahar-Noam}  showed that 
the clinching auction due to~\cite{Ausubel} generalizes  if bidders have publicly known budgets.~\cite{Conitzer-10} further showed that for
an infinitely divisible good, a bidder cannot improve her/his utility by reporting a budget smaller than the actual one. They 
also observed that  reporting larger budgets can be made suboptimal with a small randomized modification to the payments.

This paper studies dominant-strategy incentive-compatible 
deterministic mechanisms in a model with multidimensional types,
private values and private budgets with hard constraints (that is, no player can pay more than her actual budget).
We shall consider direct revelation mechanisms, which consist of a social choice function and a payment function. A social choice function aggregates  player reports of their private values as well as their  budgets 
to select one outcome. We say that a social 
choice function is  {\it implementable} if there exists a payment function that makes truthful revelation incentive-compatible. The property of a social 
choice function to be implementable by a unique payment function 
(up to an additive constant) is called {\it revenue equivalence}.

In classical mechanism design literature, implementability is intimately connected to monotonicity. 
In  single-item auctions, monotonicity requires that player who reports a higher value must have a greater probability 
of receiving the item.   
In general, monotonicity  is a necessary but not a sufficient condition for implementability of social choice functions. 
For single dimensional types 
(such as single-item auctions)~\cite{Myerson} 
showed that monotonicity is also sufficient for any social choice function to be implementable. However, for  multiple-item auctions (where players have a   
loosely correlated values for a distinct subset of items) this is no longer the case. For such multidimensional environments,~\cite{Rochet} showed that a condition called {\it cyclic-monotonicity} is necessary and sufficient for implementability. 
Combinatorially, this condition essentially says that a multidimensional social choice function is implementable if and only if a corresponding graph contains no negative cycle. 

Remarkably, monotonicity is also analogous to the celebrated revenue equivalence  principle for single-item auctions~(\cite{Myerson}). 
However, in general this need not be the case, since revenue equivalence in  multidimensional domains is not implied by cyclic-monotonicity. A recent result  by~\cite{Vohra-RevEquiv} characterizes revenue equivalence in  multidimensional domains.
The paper shows that revenue equivalence holds if and only if all distances in the above  corresponding graph are anti-symmetric. Intuitively, a strengthening of the cyclic-monotonicity condition characterizes revenue equivalence.

 While cyclic-monotonicity is a  more complex condition than monotonicity,
several studies have characterized multidimensional environments 
 for which monotonicity implies cyclic-monotonicity. 
 In a variety of multidimensional domains, monotonicity (or other related simple local properties)
 is essentially sufficient for 
 any social choice function to be 
 implementable (\cite{Moldevanu-2001},\cite{LMNB},\cite{LS}, 
 \cite{Monderer},\cite{Vohra-Book},\cite{Carroll}, 
 \cite{Archer-Kleinberg}, 
 and references therein).\footnote{\cite{Lavi-Swamy-cmon} used cyclic-monotonicity directly to design an incentive-compatible mechanism in a multidimensional scheduling setting.}

In this paper we find that Rochet's cyclic-monotonicity condition extends to  multidimensional environments with private budgets. 
We show how to circumvent the assumption that players always
have the ability to pay up to their respective values, 
while obtaining a characterization result and a corresponding graph in the same spirit as Rochet. Our second contribution is a generalization of the revenue equivalence characterization by~\cite{Vohra-RevEquiv} to  multidimensional environments with private budgets.


\subsection{Organization of this paper}
In Section~\ref{SEC-Mechanism Design without Money} we derive a simple characterization 
for multidimensional dominant-strategy incentive-compatible  mechanism design without money.  
Section~\ref{sect-private-Budgets} studies 
multidimensional dominant-strategy incentive-compatible mechanism design with private budgets. 
We start with a necessary condition
for incentive-compatible private budget settings. We next show the sufficiency of this 
condition  if over-reporting the budget is impossible. 
We then consider a strengthening of this condition and show its sufficiency for the general case (where players can report any budget).

Section~\ref{Sec:revenue-equiv} shows a revenue equivalence 
principle for private budget settings (where players can report any budget) satisfying certain reasonable conditions.
Section~\ref{Sec:example-and-application} illustrates the results through examples. All absent proofs can be found in the Appendix.

\section{Warm-up: Characterizing Mechanism Design without Money}\label{SEC-Mechanism Design without Money}
In this section we consider a multidimensional setting 
with private values where monetary transfers are infeasible. 
We show a simple characterization for players with cardinal preferences.

\subsection{The Model}
We consider a setting with $n$ players and a finite set 
 $\mathcal{A}$ of possible outcomes. 
Player $\ell$'s private valuation is determined by $v_\ell \in \mathcal{V}_\ell \subseteq \bbbr^{|\mathcal{A}|}$, 
 where $v_\ell(a)$ is player $\ell'$s value for outcome $a\in \mathcal{A}$, 
 and $\mathcal{V}_\ell$ is the space of all possible valuations of player $\ell$. 
Let  $\mathcal{V} = \mathcal{V}_1 \times \cdots \times \mathcal{V}_n$  denote the total valuation space. 
We consider a multidimensional  setting with private values 
(where $v_\ell(a)$ can be nonnegative or negative) and publicly known zero budgets. 

A {\it mechanism design without money} $f$ 
consists of a social choice function $f : \mathcal{V} \times \vec{0} \rightarrow \mathcal{A}$ and a zero payment scheme
$p : \mathcal{V} \rightarrow \vec{0}$,  where $\vec{0} \in \bbbr^n$. 
In a {\it direct revelation mechanism without money}, 
the social choice function chooses for a vector $v \in \mathcal{V}$ of aggregate reports of all players an outcome
 $f(v, \vec{0})$, whereas the payment scheme assigns a zero payment to each player. A mechanism design without money can be regarded 
as a mechanism for players with publicly-known zero budgets, where 
monetary transfers are not feasible. 
Let $(v_\ell, v_{-\ell})$ denote the aggregate report vector when player $\ell$ reports 
$v_\ell$ and the other players' reports are represented by $v_{-\ell}$.

\begin{definition} 
A direct revelation mechanism without money 
$f : \mathcal{V} \times \: \vec{0} \rightarrow \mathcal{A}$ is 
called {\rm dominant-strategy implementable} if 
for every player $\ell$, every $v_\ell, \; v'_\ell \in \mathcal{V}_\ell$  and  $v_{-\ell} \in  \mathcal{V}_{-\ell}$, 
the following condition holds:
\[
v_\ell(f(v, \; \vec{0}))  \ge v_\ell(f(v', \; \vec{0})),
\] 
where  $v = (v_\ell, \; v_{-\ell})$ and $\; v' = (v'_\ell,\;  v_{-\ell})$.
\end{definition}

\begin{remark}
{\rm When clear from context we will sometimes use the 
term {\it implementability}  to denote 
dominant-strategy implementability.} 
\end{remark}

\subsection{Presentation Assumptions}
\label{SUB-SEC:Presentation-Assumptions}
Throughout the  paper we without loss of generality 
restrict our attention to a model with a single agent (say, player $\ell$) and assume
reported valuations $v_{-\ell}$ and budgets of all other players $B_{-\ell}$ to be fixed. 
This is without loss of generality as all relevant definitions can be interpreted by holding all other players' reports fixed.
For simplicity of 
notation when clear from context we suppress the subscript $\ell$ and write $\mathcal{V}, v$ and $B$
 instead of $\mathcal{V}_\ell, v_\ell$, and $B_\ell$. 
For convenience, we also assume that $f$ is onto (since otherwise
$A$ can be condensed to be the range of $f$).

\subsection{The Characterization}

  For two outcomes $a$ and $a'$ let  

\begin{equation}
\delta(a, a') =  \inf \ \{v(a) - v(a') \ | \ v \in \mathcal{V} \mbox{ such that } 
f(v, \vec{0}) = a\}.
\label{eqn:delta-without-money}
\end{equation}

Informally, $\delta(a, a')$ represents the least gain achieved by deviation from  $a$ to $a'$, when $a$ is chosen by truthful reporting and $a'$ is chosen by misreporting (while keeping the valuation of other players fixed). Since $f$ is onto, $\delta(a, a') <  \infty$. 
We can now state our simple  characterization for 
implementable mechanisms without money.

\begin{proposition}\label{thm-Characterization-for-zero-budgets}
A mechanism without money  $f: \mathcal{V} \times \: \vec{0} \rightarrow \mathcal{A}$ is 
dominant-strategy implementable  if and only if
$\delta(a, a') \ge 0$ for every $a, \; a'\in \mathcal{A}$.
\end{proposition}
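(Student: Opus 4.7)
The plan is to verify both directions of the equivalence essentially by unpacking the definition of $\delta$, so the proof will be short and direct.

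For the forward direction, I would assume $f$ is dominant-strategy implementable and fix arbitrary outcomes $a, a' \in \mathcal{A}$. I take any valuation $v \in \mathcal{V}$ with $f(v, \vec{0}) = a$; such a $v$ exists whenever the infimum in \eqref{eqn:delta-without-money} is over a nonempty set, which is guaranteed if we only need to bound $\delta(a,a')$ from below on the cases that actually contribute to it. To exhibit a competing report that selects $a'$, I invoke the presentation assumption that $f$ is onto, giving some $v' \in \mathcal{V}$ with $f(v', \vec{0}) = a'$. Then implementability, applied to $v$ deviating to $v'$, immediately yields $v(a) \ge v(a')$, and taking the infimum over all such $v$ gives $\delta(a,a') \ge 0$.

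For the backward direction, I would assume $\delta(a, a') \ge 0$ for all $a, a' \in \mathcal{A}$ and verify the implementability inequality directly. Given any $v, v' \in \mathcal{V}$, set $a = f(v, \vec{0})$ and $a' = f(v', \vec{0})$. Then $v$ is itself a feasible point in the infimum defining $\delta(a, a')$, so
\[
v(f(v, \vec{0})) - v(f(v', \vec{0})) \;=\; v(a) - v(a') \;\ge\; \delta(a, a') \;\ge\; 0,
\]
which is precisely the implementability condition.

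There is no real obstacle here: the proposition is essentially a tautological reformulation of implementability in terms of the quantity $\delta$. The only subtle point worth highlighting in the write-up is the role of the assumption that $f$ is onto, which is what guarantees the existence of a valuation achieving a given alternative $a'$ and hence that the required misreport is available in the forward direction. Both directions combined take only a few lines, so I would present them as a single compact argument rather than splitting into separate lemmas.
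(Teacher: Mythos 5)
Your proof is correct and follows essentially the same route as the paper: both directions are obtained by unpacking the definition of $\delta$, using ontoness of $f$ to produce a report achieving $a'$ in the forward direction and the fact that $v$ itself is a feasible point of the infimum in the backward direction. The only cosmetic difference is that the paper argues the backward direction by contrapositive while you verify the implementability inequality directly, which is the same content.
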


\begin{proof} 
Assume $f$ is implementable without money. 
For any pair of valuations $v, v' \in \mathcal{V}$ such that  $f(v, \vec{0}) = a$  and 
 $f(v', \vec{0}) = a'$,  we have  $v(a) \ge v(a')$ by the implementability without money of $f$.
The definition of $\delta$ then gives 
 $v(a) - v(a') \ge \delta(a, a') \ge 0$.
 
 Conversely, if $f$ is not implementable without money, there 
 exist $v, v' \in \mathcal{V}$ such that  $v(a) <  v(a')$,  where  $f(v, \vec{0}) = a$  and 
 $f(v', \vec{0}) = a'$.  Then, $\delta(a, a') \le  v(a) - v(a') <  0$.\qed
\end{proof}

\section{Characterizing Mechanism Design with Private Budgets}\label{sect-private-Budgets}

In this section we study dominant-strategy incentive-compatible 
mechanisms in a multidimensional setting where players 
have private valuations and private budgets. 

We continue with the private value setting introduced  previously
 using the  following modifications. 
Player $\ell$ has private valuation
$v_\ell \in \mathcal{V}_\ell \subseteq \bbbr_+^{|\mathcal{A}|}$, 
 where $v_\ell(a)$ is player $\ell'$s {\it nonnegative} value for outcome 
 $a\in \mathcal{A}$. In addition, player $\ell$ has private  budget $B_\ell \ge 0$. 
 Let $\mathcal{B}_\ell \subseteq \bbbr_+ \cup \infty$ denote
 the space of all possible budget limits of player $\ell$.
 Notice that if the budget of player $\ell$ is {\it public knowledge} then the set $\mathcal{B}_\ell$ is a singleton. 
 
 Players are assumed to be utility maximizers  but can never pay beyond their budgets. 
 Specifically, player  $\ell$'s 
 utility with  private budget $B_\ell$, derived for paying $x \ge 0$ for outcome $a$,  
 is $v_\ell(a) - x$ as long as $x \le B_\ell$ and is negative infinity if 
 $x > B_\ell$.  
Note that we consider here a multidimensional  setting with
 {\it nonnegative} private values and private budget constraints.

A mechanism $(f, \; p)$ with private budgets consists of a social-choice function
$f : \mathcal{V} \times \mathcal{B} \rightarrow \mathcal{A}$ and a payment function  
$p : \mathcal{V} \times \mathcal{B} \rightarrow \bbbr^n$, where $\mathcal{V} = \mathcal{V}_1 \times \cdots \times \mathcal{V}_n$, 
$\mathcal{B} = \mathcal{B}_1 \times \cdots \times \mathcal{B}_n$ and 
the $\ell^{\mathrm{th}}$  component of $p$ is the payment requirement of player $\ell$. 
We restrict attention to direct revelation mechanisms 
where each player 
reports its private valuation $v_\ell \in \mathcal{V}_\ell$, as well as 
its private budget $B_\ell \in \mathcal{B}_\ell$.    
Let $v_{-\ell}, B_{-\ell}$ denote the reported valuations and budgets of players 
other than  $\ell$.

\begin{definition} 
A direct revelation mechanism $(f,\; p)$ with  private budgets is 
called {\rm dominant-strategy implementable}  (or {\rm implementable}, for short) 
if the following conditions hold: 
\end{definition}

\begin{enumerate}

\item
{\bf (IC)} Incentive Compatibility:  
For  every player $\ell =1, \ldots, n$, every $v \times B \in \mathcal{V} \times \mathcal{B}$,  and 
every $v'_\ell \times B'_\ell \in \mathcal{V}_\ell \times \mathcal{B}_\ell$ if we denote 
$a=f(v, B)$ and $a'=f((v'_\ell, v_{-\ell}) , (B'_\ell , B_{-\ell}))$, then 

\[
v_\ell(a)  - \ p_\ell(v, B) \ge  
v_\ell(a') \ - \ p_\ell((v'_\ell, v_{-\ell}) , (B'_\ell , B_{-\ell})), 
\]
\[
 \mbox{for every } p_\ell((v'_\ell, v_{-\ell}) , (B'_\ell , B_{-\ell})) \le B_\ell. 
\]

i.e., no player can improve its utility by misreporting its true private valuation and/or true private budget in order to obtain some other outcome whose payment is below $B_\ell$.
If the budget is {\it public knowledge} then players can only report their true budgets. If {\it budget over-reporting} is not allowed, then 
$B'_\ell \le B_\ell$. 
\item 
 {\bf (IR)} Individual Rationality: 
The mechanism never requires a player to pay more than its reported value.
\item 
 {\bf (BF)} Budget Feasibility: The mechanism never requires a player to pay more than its reported budget.
\item
 {\bf (NPT)} No Positive Transfer:
The mechanism never subsidizes any player with a monetary amount. 
\end{enumerate}

\begin{definition} 
A social choice function $f$ 
with private budgets is called  
{\rm implementable}  if there exists a 
payment function $p$ such that the mechanism $(f, \;p)$ is dominant-strategy implementable. 
\end{definition}

\paragraph*{Graph Theoretic Definitions.} To state our main results we use some basic definitions 
from Graph Theory.  Recall that a (directed) {\it graph} is a pair $G(M, E)$, 
where $M$ is a finite set  and $E \subseteq M \times M$.  
An element $n \in M$ is called a node and an element $e=(n, n')\in E$ is called an arc.  A {\it complete} directed graph is a graph in which $E = M \times M$.

A (finite) {\it path} from node $n_1$ to node $n_k$ in $G$ is a  sequence $P=(n_1, n_2, \ldots, n_k)$ in $E$ such that  
$e_i =(n_i, m_i)  \in E$ and $n_{i+1} = m_i$, where  $i=1, \ldots, k-1$.    
We denote by $e\in P$ that $e$ is an arc in $P$. A {\it (finite) cycle} in $G$ is a path  $C=(e_1, e_2, \ldots, e_k)$ such that $m_k = n_1$. 
 When clear from the context,  the cycle $C$ may be written as $(n_1, n_2, \ldots, n_k)$.
 To each arc $e \in E$ we assign a {\it length} $l(e) \in \bbbr$. 
The length of a path $P$ is  $l(P)=\Sigma_{e \in P\;} l(e)$. 
A path $P$ is called negative if $l(P) < 0$. 
We denote by $\Delta (n_i, n_k)$ the length of the 
{\it shortest path} from node $n_i$ to node $n_k$ 
in $G$.

A {\it strongly connected component} of a directed graph $G=(M, E)$ 
is a maximal set of vertices $K \subseteq M$ such that for every pair of nodes $n_i$ and $n_j$ in $K$, there is a path 
from $n_i$ to $n_j$, and a path from $n_j$ to $n_i$.
 A {\it directed acyclic graph} is a directed graph with no cycles.

\subsection{Necessary and Sufficient Conditions for Private Budgets}
Our main result in this section is a necessary and sufficient conditions for the multidimensional private budget setting.
Theorem~\ref{thm:necessary-cond-private-budgets} provides a necessary condition for implementability. Theorem~\ref{thm:suff-private-budgets-no-over-report} and  Theorem~\ref{thm:sufficiency-if-G=widehat-G}
show the {\it sufficiency}  of the condition in Theorem \ref{thm:necessary-cond-private-budgets} in several settings.
The proof technique is constructive;  namely, a concrete payment function is specified to show the sufficiency. 
Examples~\ref{EXAMPLE:G-vs-widehat-G} and~\ref{EXAMPLE:non-necessity-of-widehat-G} in Section~\ref{Sec:example-and-application} 
further demonstrate our conditions.

For $a \in \mathcal{A}$, we define
\begin{equation}
\beta(a)  = \inf \ \{B \ | \ f(v, B) = a \mbox { for some } v \in \mathcal{V} \},  
\label{eqn:beta-private} 
\end{equation}

\begin{equation}
\omega(a)  = \inf \ \{v(a) \ |  \ f(v, B) = a \mbox { for some } B \in \mathcal{B} \}, 
\label{eqn:omega-private} 
\end{equation}

\begin{equation}
\theta(a)  = \min \ \{ \beta(a), \ \omega(a) \}. 
\label{eqn:theta-private}
\end{equation}

i.e.,  $\beta(a)$ is the minimum reported budget required 
to obtain outcome $a$, 
 and $\omega(a)$ is the minimum reported valuation required 
to obtain outcome $a$.
Therefore, $\theta(a)$ serves as an 
upper bound on any budget-feasible individually-rational payment for outcome $a$, if exists. Since $f$ is onto, $\theta(a) < \infty$. 

The next definition is a 
generalization of~(\ref{eqn:delta-without-money}) to the private budget setting. 
For $a, a' \in \mathcal{A}$, let 

\begin{equation}
\delta(a, a') =  \inf \ \{v(a) - v(a') \ | \  
f(v, B) = a, \mbox{ where } v \in \mathcal{V}  \mbox{ and }  B \in  \mathcal{B} \cap \left[ \beta(a'), \infty \right)   \}.
\label{eqn:delta-private}
\end{equation}

We follow the convention that the infimum over an empty set equals 
$\infty$ (observe that if $\beta(a') > \beta(a)$, it can be the case that $f(v, B) \neq a$ for every $v, B$ with $B \ge \beta(a')$). Furthermore, if $f$ is implementable then  $\delta(a, a') > -\infty$ 
(see Claim~\ref{claim-delta-neq-neg-infty}).

Using $\delta$ and $\theta$, we can now construct a   
finite directed graph associated 
with the social choice function $f$.  

\begin{definition}[The Graph $G_{f, \; \mathcal{B}}$]\label{def:G-f-graph}
Let $G_{f, \; \mathcal{B}}$ be a complete directed graph over the nodes  
${M}(G_{f, \;\mathcal{B}}) = \{1, 2, \ldots, |\mathcal{A}|\}$, 
i.e.,  the nodes of the graph correspond to all possible outcomes. 
The {\rm length} of a directed arc $(i, k) \in {E}(G_{f, \;\mathcal{B}})$ 
is defined as 
\[ 
l(i, \; k) = \min \{ \delta(a_i,\;  a_k), \; \theta(a_i)\}.  
\] 
\end{definition}

We can now state our necessity result: 

\begin{theorem}\label{thm:necessary-cond-private-budgets}
If a social choice function $f: \mathcal{V} \times \mathcal{B} \rightarrow \mathcal{A}$ 
with private budgets is  implementable  then 
 the corresponding graph $G_{f, \; \mathcal{B}}$ 
 contains no negative cycles.
\end{theorem}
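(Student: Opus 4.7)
The plan is to follow the classical necessity template of Rochet's cyclic-monotonicity theorem, adapted to budgeted settings via a preliminary reduction to per-outcome payments. Throughout, I fix the other players' reports as per the Presentation Assumptions of Section~\ref{SUB-SEC:Presentation-Assumptions}, and start from a payment function $p$ that implements $f$.

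The first step, which I expect to be the main obstacle, is to show that $p(v,B)$ depends only on the induced outcome $a=f(v,B)$. Concretely, if $f(v_1,B_1)=f(v_2,B_2)=a$ and WLOG $p(v_1,B_1)\le p(v_2,B_2)$, then by BF we have $p(v_1,B_1)\le p(v_2,B_2)\le B_2$, so the deviation in which type $(v_2,B_2)$ reports $(v_1,B_1)$ is budget-feasible and yields the same outcome $a$ with weakly lower payment. IC then forces $p(v_1,B_1)=p(v_2,B_2)$. This lets me define $\pi:\mathcal{A}\to\bbbr_+$ by letting $\pi(a)$ be the common payment at outcome $a$. NPT yields $\pi(a)\ge 0$; taking $\inf_B$ in the BF bound $\pi(a)\le B$ over $\{(v,B):f(v,B)=a\}$ yields $\pi(a)\le\beta(a)$; combining with the analogous IR bound $\pi(a)\le\omega(a)$ gives $\pi(a)\le\theta(a)$.

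With $\pi$ in hand, the key arc bound is immediate: for any arc $(a,a')$ and any $(v,B)$ with $f(v,B)=a$ and $B\ge\beta(a')$, pick any $(v',B')$ with $f(v',B')=a'$ (exists because $f$ is onto), whose payment is $\pi(a')\le\beta(a')\le B$, so this deviation is affordable. IC then gives
\[
v(a)-v(a')\;\ge\;\pi(a)-\pi(a').
\]
Taking the infimum over such $(v,B)$ yields $\delta(a,a')\ge\pi(a)-\pi(a')$; and trivially $\theta(a)\ge\pi(a)\ge\pi(a)-\pi(a')$ since $\pi(a')\ge 0$. Hence $l(a,a')=\min\{\delta(a,a'),\theta(a)\}\ge\pi(a)-\pi(a')$. (The bound is vacuous in the edge case where the infimum in $\delta$ is over the empty set, i.e.\ $\delta(a,a')=\infty$, in which case $l(a,a')=\theta(a)$ and the second half of the bound still applies.)

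Finally, for any cycle $(a_{i_1},a_{i_2},\ldots,a_{i_m},a_{i_1})$ in $G_{f,\mathcal{B}}$, the arc-length bound telescopes:
\[
\sum_{j=1}^{m} l(a_{i_j},a_{i_{j+1}}) \;\ge\; \sum_{j=1}^{m}\bigl(\pi(a_{i_j})-\pi(a_{i_{j+1}})\bigr) \;=\; 0,
\]
so no negative cycle exists. The one nontrivial ingredient is the per-outcome payment reduction; in the classical budget-free Rochet setting this is immediate from IC, but here one must verify that BF does not block the crucial "swap" deviation between two types sharing an outcome—and in fact BF is exactly what makes that deviation automatically affordable, which is why the argument goes through.
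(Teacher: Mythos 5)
Your proof is correct, and its first two steps coincide with the paper's: the reduction to per-outcome prices is exactly Lemma~\ref{lemma:taxation-Principle}, and the bound $\delta(a,a')\ge \pi(a)-\pi(a')$ obtained by applying IC to a budget-feasible deviation (affordable because $\pi(a')\le\beta(a')\le B$, with $\pi(a')\le\beta(a')$ coming from BF via the infimum) is precisely Claim~\ref{claim-delta-neq-neg-infty}. Where you genuinely depart from the paper is in how the $\theta$-arcs are handled. The paper splits the cycle analysis into two separate claims: cycles all of whose arcs have length $\delta$ are disposed of by telescoping (Claim~\ref{claim-private-no-negative-cycle-through-1-|A|}), while a putative negative cycle containing a $\theta$-arc is attacked by isolating a negative segment ending in such an arc and deriving a contradiction with BF or IR through an $\epsilon$-argument, with a case split according to whether $\theta(a_k)=\beta(a_k)$ or $\theta(a_k)=\omega(a_k)$ (Claim~\ref{claim-private-no-negative-cycle-through-zero}). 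You instead observe that $\pi(a)\le\theta(a)$ (the same infimum argument applied to BF and IR) together with $\pi(a')\ge 0$ (NPT) gives the uniform arc bound $l(a,a')=\min\{\delta(a,a'),\theta(a)\}\ge\pi(a)-\pi(a')$, including the degenerate case $\delta(a,a')=\infty$, so that one telescoping sum handles every cycle at once. This buys a shorter and more transparent argument that avoids the segment decomposition and the $\beta$-versus-$\omega$ case analysis; the paper's route, on the other hand, makes explicit exactly which axiom (BF or IR) a negative cycle through a $\theta$-arc would violate. Both arguments are valid for the theorem as stated.
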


We now consider the case where over-reporting the budget is 
impossible and show the {\it sufficiency} of the condition in Theorem
 \ref{thm:necessary-cond-private-budgets}. 
This allows us to  derive a complete characterization of the class
 of implementable social choice functions where over-reporting the private budget is impossible.  
This class is rather general, it includes mechanisms without money, mechanisms for players with publicly-known budgets and more. 
In several settings, reporting larger budgets can be made a dominated strategy (suboptimal) 
by using a small randomized modification 
to the payments~(\cite{Conitzer-10}).

\begin{theorem}\label{thm:suff-private-budgets-no-over-report}
A social choice function $f: \mathcal{V} \times \mathcal{B} \rightarrow \mathcal{A}$ 
for private budgets with no budget over-reporting is 
 implementable  if and only if 
 the corresponding graph $G_{f, \; \mathcal{B}}$ 
 contains no negative  cycles.
\end{theorem}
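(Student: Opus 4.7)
The necessity direction is precisely Theorem~\ref{thm:necessary-cond-private-budgets}, so my plan is to establish sufficiency by explicitly constructing a payment function $p$ that depends only on the selected outcome. Following the shortest-path \emph{potential} idea behind Rochet's argument, once the right auxiliary graph is built the potentials will automatically satisfy both $0 \le p(a) \le \theta(a)$ and the key inequality $p(a) - p(a') \le l(a,a')$.

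First I would augment $G_{f,\mathcal{B}}$ into a graph $G^+$ by adjoining a source node $s$ with arcs $l(s,a) = 0$ and $l(a,s) = \theta(a)$ for every $a \in \mathcal{A}$. The preliminary lemma is that $G^+$ also has no negative cycle. Cycles avoiding $s$ are covered by hypothesis; any cycle $s \to a_1 \to \cdots \to a_k \to s$ has length $\sum_{i=1}^{k-1} l(a_i,a_{i+1}) + \theta(a_k)$, and closing the portion $a_1 \to \cdots \to a_k$ by the arc $a_k \to a_1$ inside $G_{f,\mathcal{B}}$ gives a genuine cycle of nonnegative length. Combined with the bound $l(a_k,a_1) \le \theta(a_k)$, which holds by the very definition of the arc lengths, this forces $\sum_{i=1}^{k-1} l(a_i,a_{i+1}) \ge -\theta(a_k)$, so the $G^+$-cycle is nonnegative.

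Having this, I set $p(a) := -\Delta_{G^+}(s,a)$. The direct arc $l(s,a) = 0$ gives $p(a) \ge 0$ (NPT); the same cycle argument forces $\Delta_{G^+}(s,a) \ge -\theta(a)$ and thus $p(a) \le \theta(a) \le \min\{\omega(a),\beta(a)\}$, yielding IR and BF on every truthful report; and the standard triangle inequality gives $p(a) - p(a') \le l(a,a')$ for every pair. IC is the last check: for a truthful $(v,B)$ with $f(v,B) = a$ and any no-over-reporting deviation $(v',B')$ with $B' \le B$ producing $a' = f(v',B')$, the inequality $B \ge B' \ge \beta(a')$ places $(v,B)$ inside the scope of the infimum defining $\delta(a,a')$, so $v(a) - v(a') \ge \delta(a,a') \ge l(a,a') \ge p(a) - p(a')$, which is IC. The deviation is automatically feasible under the \emph{true} budget since $p(a') \le \beta(a') \le B' \le B$.

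The main obstacle I anticipate is the no-negative-cycle lemma for $G^+$: this is the single place where the specific shape $l(a,a') = \min\{\delta(a,a'),\theta(a)\}$ is genuinely used, and it is what converts a purely combinatorial hypothesis into the quantitative price bounds $0 \le p(a) \le \theta(a)$. Once that lemma is in hand, the rest is bookkeeping against the definitions of $\beta$, $\omega$, $\theta$ and $\delta$, with the no-over-reporting hypothesis entering only at the IC step to guarantee $B \ge \beta(a')$.
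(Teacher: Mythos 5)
Your proof is correct and follows essentially the same route as the paper: the paper's proof also augments the graph with a special node $0$ joined by zero-length arcs into the outcomes and $\theta$-length arcs out of them (the graph $H_{f,\mathcal{B}}$), shows no negative cycle survives the augmentation, defines payments as shortest-path potentials, and uses the no-over-reporting hypothesis exactly where you do, to get $B \ge B' \ge \beta(a')$ and hence $\delta(a,a') \le v(a)-v(a')$ in the IC step. The only cosmetic differences are that the paper keeps $\delta$ (rather than $\min\{\delta,\theta\}$) on the inter-outcome arcs and takes $p_{a_i}$ as the shortest path \emph{to} the special node rather than the negated distance \emph{from} it.
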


Example~\ref{EXAMPLE:G-vs-widehat-G} shows that the no budget over-reporting assumption is crucial to the sufficiency in Theorem~\ref{thm:suff-private-budgets-no-over-report}. Notice that Proposition~\ref{thm-Characterization-for-zero-budgets} is not a special case of 
 Theorem~\ref{thm:suff-private-budgets-no-over-report}, 
 since here we restrict the values to be nonnegative.

We now consider the general setting of private budgets 
(assuming players can report any budget limit) and provide a sufficient condition for implementability,
that differs from the necessary condition in Theorem~\ref{thm:necessary-cond-private-budgets}.  Our sufficient condition for implementability requires a subtle but crucial change in~(\ref{eqn:delta-private}). 
For $a, a' \in \mathcal{A}$, we define
 
\begin{equation}
\widehat{\delta}(a, a') =  \inf \ \{v(a) - v(a') \ | \  f(v, B) = a, \mbox{ where } v \in \mathcal{V} \mbox{ and  }  B \in \mathcal{B}\}.
\label{eqn:delta-private-suff}
\end{equation}

Clearly, $\widehat{\delta}(a, a') \le \delta(a, a')$. 
Additionally, if the budget is publicly known or if $\beta(a') \le \beta(a)$ then $\widehat{\delta}(a, a') = \delta(a, a')$.

\begin{definition}[The Graph $\widehat{G}_{f, \; \mathcal{B}}$]\label{def:widehat-G-f-graph}
Let $\widehat{G}_{f, \; \mathcal{B}}$ be a complete directed graph over the nodes  
${M}(G_{f, \;\mathcal{B}}) = \{1, 2, \ldots, |\mathcal{A}|\}$, 
i.e.,  the nodes of the graph correspond to all possible outcomes. 
The {\rm length} of a directed arc $(i, k) \in {E}(G_{f, \;\mathcal{B}})$ 
is defined as 
\[ 
l(i, \; k) = \min \{ \widehat{\delta}(a_i,\;  a_k), \; \theta(a_i)\}.  
\] 
\end{definition}

If the graph $\widehat{G}_{f, \; \mathcal{B}}$ 
contains no negative cycles then so does ${G}_{f, \; \mathcal{B}}$, but not vice versa. 

We can now state our sufficiency results for private budgets 
(assuming players can report any budget limit).

\begin{proposition}\label{prop:suff-private-budget} 
Let $f: \mathcal{V} \times \mathcal{B}\rightarrow \mathcal{A}$ 
be a social choice function with private budgets. 
If the corresponding graph $\widehat{G}_{f, \; \mathcal{B}}$ 
contains no negative  cycles
then $f: \mathcal{V} \times \mathcal{B}\rightarrow \mathcal{A}$ is implementable.
\end{proposition}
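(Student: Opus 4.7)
The plan is to build a Rochet-style shortest-path payment by extending $\widehat{G}_{f,\mathcal{B}}$ with an auxiliary reference node $\mathbf{0}$ that encodes the NPT and BF/IR constraints as arcs. Concretely, form $\widetilde{G}$ by adjoining $\mathbf{0}$ to the vertex set of $\widehat{G}_{f,\mathcal{B}}$, retaining every original arc of length $l(i,k)=\min\{\widehat{\delta}(a_i,a_k),\theta(a_i)\}$, and for each outcome $a_i$ adding an arc $(\mathbf{0},a_i)$ of length $0$ (which will force $p(a_i)\ge 0$) together with an arc $(a_i,\mathbf{0})$ of length $\theta(a_i)$ (which will force $p(a_i)\le\theta(a_i)$, yielding both BF and IR since $\theta(a_i)\le\min\{\beta(a_i),\omega(a_i)\}$).

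The crux of the argument is to show that $\widetilde{G}$ still contains no negative cycle. Any simple cycle either avoids $\mathbf{0}$, in which case it is a cycle of $\widehat{G}_{f,\mathcal{B}}$ and hence nonnegative by hypothesis, or has the form $\mathbf{0}\to a_{i_1}\to\cdots\to a_{i_k}\to \mathbf{0}$ with total length $\sum_{j=1}^{k-1} l(i_j,i_{j+1})+\theta(a_{i_k})$. For the latter case, close the interior path into a cycle of $\widehat{G}_{f,\mathcal{B}}$ by appending the arc $(a_{i_k},a_{i_1})$; nonnegativity of that cycle yields $\sum_{j=1}^{k-1} l(i_j,i_{j+1})\ge -l(i_k,i_1)$, and then $l(i_k,i_1)\le\theta(a_{i_k})$ (immediate from the definition of $l$ as a minimum that includes $\theta(a_{i_k})$) gives $\sum_{j=1}^{k-1} l(i_j,i_{j+1})+\theta(a_{i_k})\ge 0$. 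This step is the main technical obstacle, and it explains precisely why the arc length in $\widehat{G}_{f,\mathcal{B}}$ is the min with $\theta(a_i)$ rather than $\widehat{\delta}$ alone.

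Having established that $\widetilde{G}$ has no negative cycle, set $p(a_k):=\Delta(a_k,\mathbf{0})$, the shortest-path distance from $a_k$ to $\mathbf{0}$ in $\widetilde{G}$, which is well defined. Three short verifications close the proof. First, NPT follows because the concatenation of a shortest $a_k$-to-$\mathbf{0}$ path with the zero-length arc $(\mathbf{0},a_k)$ is a cycle of length $p(a_k)$, so $p(a_k)\ge 0$. Second, BF and IR follow from $p(a_k)\le\theta(a_k)$ via the direct arc $(a_k,\mathbf{0})$, since $\theta(a_k)$ dominates both $\beta(a_k)$ and $\omega(a_k)$. Third, the shortest-path triangle inequality $\Delta(a_i,\mathbf{0})-\Delta(a_k,\mathbf{0})\le l(i,k)\le\widehat{\delta}(a_i,a_k)$ yields $p(a_i)-p(a_k)\le v(a_i)-v(a_k)$ for every $(v,B)$ with $f(v,B)=a_i$, by the definition of $\widehat{\delta}$ as an infimum; this ensures IC, because any reported deviation leading to $a_k$ is either budget-feasible (in which case the inequality above makes it unprofitable) or not (in which case the hard-budget-constraint utility convention rules it out). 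Appealing to the taxation principle to treat $p$ as a function of the chosen outcome alone (as under the single-agent reduction of Section~\ref{SUB-SEC:Presentation-Assumptions}) completes the construction.
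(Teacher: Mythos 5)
Your proof is correct and follows essentially the same route as the paper: adjoin a reference node $\mathbf{0}$ whose incident arcs encode NPT and BF/IR, show the augmented graph inherits the no-negative-cycle property from $\widehat{G}_{f,\mathcal{B}}$, and define payments as shortest-path distances to $\mathbf{0}$. The only difference is cosmetic: the paper's auxiliary graph $\widehat{H}_{f,\mathcal{B}}$ keeps an outcome-to-outcome arc $(i,k)$ only when $\widehat{\delta}(a_i,a_k)\le\theta(a_i)$ (with length $\widehat{\delta}(a_i,a_k)$) and hence verifies IC by a two-case argument, whereas you retain every arc with length $\min\{\widehat{\delta}(a_i,a_k),\theta(a_i)\}$, which lets a single triangle-inequality argument handle IC uniformly.
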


Example~\ref{EXAMPLE:non-necessity-of-widehat-G}  shows an implementable social choice function whose corresponding graph $\widehat{G}_{f, \; \mathcal{B}}$ contains a negative cycle.  However, if $\widehat{G}_{f, \; \mathcal{B}} = {G}_{f, \; \mathcal{B}}$, then 
the condition in Proposition~\ref{prop:suff-private-budget}
is  also necessary.

\begin{theorem}\label{thm:sufficiency-if-G=widehat-G}
Let $f: \mathcal{V} \times \mathcal{B}\rightarrow \mathcal{A}$ 
be a social choice function with private budgets.
If $\widehat{\delta}(a, a') = \delta(a, a')$ for every $a, a' \in \mathcal{A}$ 
then  $f: \mathcal{V} \times \mathcal{B} \rightarrow \mathcal{A}$   is 
 implementable  if and only if 
 the corresponding graph $G_{f, \; \mathcal{B}}$ 
 contains no negative  cycles.
\end{theorem}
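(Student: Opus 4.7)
The plan is to derive the theorem as a direct corollary of the two previously stated results, namely Theorem~\ref{thm:necessary-cond-private-budgets} (necessity) and Proposition~\ref{prop:suff-private-budget} (sufficiency). The linking observation I would use is that the hypothesis $\widehat{\delta}(a,a')=\delta(a,a')$ for every $a,a'\in\mathcal{A}$, combined with Definitions~\ref{def:G-f-graph} and~\ref{def:widehat-G-f-graph}, forces the two graphs to coincide as labeled digraphs: they share the vertex set $\{1,\ldots,|\mathcal{A}|\}$, and each arc length is $\min\{\delta(a_i,a_k),\theta(a_i)\} = \min\{\widehat{\delta}(a_i,a_k),\theta(a_i)\}$, with the same $\theta$ function in both cases. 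Thus $G_{f,\mathcal{B}} = \widehat{G}_{f,\mathcal{B}}$ under the given hypothesis, and in particular the two graphs have identical cycle sets and identical cycle lengths.

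For the forward implication, I would simply invoke Theorem~\ref{thm:necessary-cond-private-budgets}: if $f$ is implementable, then $G_{f,\mathcal{B}}$ contains no negative cycles, with no appeal to the hypothesis $\widehat{\delta}=\delta$ required.

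For the reverse implication, I would assume $G_{f,\mathcal{B}}$ contains no negative cycles. By the graph-identification argument above, $\widehat{G}_{f,\mathcal{B}}$ then also contains no negative cycles, so Proposition~\ref{prop:suff-private-budget} immediately yields implementability of $f$.

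There is no real obstacle here; the only thing to be careful about is that the hypothesis $\widehat{\delta}=\delta$ is genuinely needed in the sufficiency direction, because in general one only has $\widehat{\delta}(a,a')\le\delta(a,a')$ (as noted in the text), so absence of negative cycles in $G_{f,\mathcal{B}}$ does not by itself imply absence of negative cycles in $\widehat{G}_{f,\mathcal{B}}$. Under the stated equality, this gap closes and the two directions fit together into a clean characterization, mirroring the structure of Theorem~\ref{thm:suff-private-budgets-no-over-report} but now justified by an explicit assumption on the $\delta$-values rather than by the no-over-reporting restriction.
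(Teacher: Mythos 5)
Your proposal is correct and follows essentially the same route as the paper, which also derives the theorem directly from Theorem~\ref{thm:necessary-cond-private-budgets} and Proposition~\ref{prop:suff-private-budget}; your explicit observation that $\widehat{\delta}=\delta$ forces $\widehat{G}_{f,\;\mathcal{B}}=G_{f,\;\mathcal{B}}$ is exactly the (unstated) linking step behind the paper's ``immediate'' argument.
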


\begin{proof}
Immediate from  Theorem~\ref{thm:necessary-cond-private-budgets} and Proposition~\ref{prop:suff-private-budget} 
(notice that Theorem~\ref{thm:suff-private-budgets-no-over-report} is of no use here,
since players can over-report their budget limits).\qed
\end{proof}

\section{Revenue Equivalence with Budgets}\label{Sec:revenue-equiv}

The celebrated revenue equivalence principle  says that
any two payment function implementing the same social choice function differ by a constant, and thus the 
payment function is uniquely defined up to an additive constant. 
The uniqueness for budget constraints players is slightly more subtle. 

Recall that by Lemma~\ref{lemma:taxation-Principle} (using the 
presentation assumptions in Subsection~\ref{SUB-SEC:Presentation-Assumptions}) 
if  $f(v, B)=f(v', B')=a$ for some $a \in \mathcal{A}$ we have that $p(v, B)=p(v, B')=p_{a}$. 
When players have budget constraints, 
any two payment functions implementing the same $f: \mathcal{V} \times \mathcal{B} \rightarrow \mathcal{A}$  might be differ by a {\it collection} of constants. Intuitively, there might be a distinct constant for every budget level $\beta(a)$, where $\beta(a)$ denotes the minimum reported budget 
required to obtain outcome $a \in \mathcal{A}$.

\begin{definition}[Revenue Equivalence]
An implementable social choice function 
$f: \mathcal{V} \times \mathcal{B}\rightarrow \mathcal{A}$ 
satisfies the revenue equivalence principle 
if for every two dominant strategy implementable mechanisms 
$(f, p)$ and $(f, p')$, 
we have that $\beta(a) = 
\beta(a')$ implies that 
\[
p_{a}-  p'_{a} =  p_{a'}-  p'_{a'}.
\]
\end{definition}

Notice that if  $\beta(a) \neq \beta(\widehat{a})$
 then $p_{\widehat{a}} -  
 p'_{\widehat{a}}$ need not be equal to $p_{a} -  p'_{a}$.

\subsection{Assumptions}
We shall show, under several reasonable assumptions, that a necessary and sufficient condition for revenue equivalence with private budget do exists.
We now state our assumptions on $f$. 

\begin{definition}[Generic Implementation at $a_i$]\label{def:Generically-implementable}
Let $f: \mathcal{V} \times \mathcal{B}\rightarrow \mathcal{A}$ be a social choice function, and let $a_i \in \mathcal{A}$. We say that 
a payment function  $p: \mathcal{V}\times \mathcal{B}\rightarrow \bbbr$ 
{\rm generically implements}  $f$ {\rm at} $a_i$  if 
\begin{enumerate}
\item 
$(f, p)$ is a dominant strategy implementable mechanism.

\item 
Let $a_j \in \mathcal{A}$. 
If $\beta(a_j) = \beta(a_i) > 0$ then 
$0 < p_{a_j} < \theta(a_j)$.

\item
Let $a_j, a_k \in \mathcal{A}$. 
If  $\beta(a_k) < \beta(a_j) = \beta(a_i)$
then $p_{a_j} - p_{a_k} \neq  \delta(a_j, a_k)$.

\item
Let $a_j, a_k \in \mathcal{A}$. 
If $\beta(a_k) > \beta(a_j) = \beta(a_i)$, $f(v, b) = a_j$ and 
$v(a_j)-v(a_k) \ge p_{a_j} - p_{a_k}$ then 
$v(a_j)-v(a_k) > p_{a_j} - p_{a_k}$. 
\end{enumerate}

If $p: \mathcal{V} \times \mathcal{B}\rightarrow \bbbr$ 
generically implements
$f$ at $a_i$, we say that $f$ is 
{\rm generically implementable at} $a_i$. 
\end{definition}

Intuitively, the assumption requires strict inequalities 
(recall Claim~\ref{claim-delta-neq-neg-infty}). 
Notice that  in Definition~\ref{def:Generically-implementable},  part~(4) it might be the case that $p_{a_k} > \beta(a_j)$.

\begin{definition}\label{def:Generically-implementable-at-all-outcomes}
A social choice function $f: \mathcal{V} \times \mathcal{B}\rightarrow \mathcal{A}$
is {\rm generically implementable} if 
for every $a_i \in \mathcal{A}$ there exists a payment function  $p^i: \mathcal{V} \times \mathcal{B}\rightarrow \bbbr$ that 
generically implements
$f$ at $a_i$.
\end{definition}

\subsection{Characterization}\label{SUB-SEC:REV-EQUIV-THM}
In this subsection we extend a recent characterization of revenue equivalence for multidimensional domains by~\cite{Vohra-RevEquiv}. 
We state our characterization result for private budgets and then  briefly discuss the differences between the proof techniques.  
Recall that $\Delta_{G_{f, \; \mathcal{B}}}(i, i')$ is the length of the 
{\it shortest path} from node $i$ to node $i'$ 
in the graph $G_{f, \; \mathcal{B}}$. 

\begin{theorem}[Characterization of Revenue Equivalence]\label{thm:rev-equiv}
A generically implementable social choice function 
$f: \mathcal{V} \times \mathcal{B}\rightarrow \mathcal{A}$ for private budgets  
satisfies the revenue equivalence principle 
if and only if in
all corresponding  graphs $G_{f, \; \mathcal{B}}$
obtained from a combination of a player and a reported valuations  and budgets of the
other players we have that $\Delta_{G_{f, \; \mathcal{B}}}(i, k)= -\Delta_{G_{f, \; \mathcal{B}}}(k, i)$
 for all $a_i, a_k \in \mathcal{A}$ with $\beta(a_i)=\beta(a_k)$.
\end{theorem}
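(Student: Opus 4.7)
The plan is to mirror Vohra's revenue equivalence argument, adapted to the present setting where payments are constrained by $0\le p_a\le\theta(a)$ and where revenue equivalence needs only to hold within each $\beta$-level set.

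For the $(\Leftarrow)$ direction, let $(f,p)$ and $(f,p')$ be any two implementations and fix $a_i,a_k$ with $\beta(a_i)=\beta(a_k)$. The (IC), (IR), (BF), (NPT) conditions together with the taxation principle give $p_{a_i}-p_{a_k}\le l(i,k)$ along every arc, and iterating along a shortest path gives $p_{a_i}-p_{a_k}\le \Delta_{G_{f,\mathcal{B}}}(i,k)$; analogously $p'_{a_k}-p'_{a_i}\le \Delta_{G_{f,\mathcal{B}}}(k,i)$. Summing and applying antisymmetry,
\[
(p_{a_i}-p'_{a_i})-(p_{a_k}-p'_{a_k})\le \Delta_{G_{f,\mathcal{B}}}(i,k)+\Delta_{G_{f,\mathcal{B}}}(k,i)=0,
\]
and the reverse inequality by swapping $i$ and $k$ yields equality, which is exactly revenue equivalence.

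For the $(\Rightarrow)$ direction I would argue the contrapositive. Suppose $\Delta_{G_{f,\mathcal{B}}}(i,k)+\Delta_{G_{f,\mathcal{B}}}(k,i)>0$ for some same-$\beta$ pair $(a_i,a_k)$ (the sum is nonnegative by Theorem~\ref{thm:necessary-cond-private-budgets}), and fix a generic implementation $p$ of $f$ at $a_i$. Since $p_{a_i}-p_{a_k}\le\Delta(i,k)$ and $p_{a_k}-p_{a_i}\le\Delta(k,i)$ cannot both be equalities under the strict positivity above, WLOG $p_{a_i}-p_{a_k}<\Delta_{G_{f,\mathcal{B}}}(i,k)$ (the other case is symmetric since parts~(2)--(4) of Definition~\ref{def:Generically-implementable} treat every outcome in the class $C=\{a:\beta(a)=\beta(a_i)\}$ alike). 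Let $S=\{a\in C:p_{a_i}-p_a=\Delta_{G_{f,\mathcal{B}}}(i,a)\}$. A short triangle-inequality check shows that $S$ is closed under the within-$C$ arcs $(a,b)$ that are tight in $p$ (those with $p_a-p_b=\Delta(a,b)$), and the slack $p_{a_i}-p_{a_k}<\Delta(i,k)$ gives $a_k\notin S$. Define $\psi_a=\mathbf{1}[a\in S]$ for $a\in C$ and $\psi_a=0$ otherwise, and set $p'=p+\epsilon\psi$ for sufficiently small $\epsilon>0$. Parts~(2)--(4) of generic implementation render the within-$C$ IR/BF constraints and every cross-class IC arc incident to $C$ strictly slack in $p$, so they continue to hold after an $O(\epsilon)$ perturbation. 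The within-$C$ arc constraints survive because $p$-tight arcs satisfy $\psi_a\le\psi_b$ by construction of $S$, while non-tight ones have strict slack absorbing the perturbation; cross-class IC arcs from $\mathcal{A}\setminus C$ into $C$ only become easier since $\psi\ge 0$. Thus $p'$ is a second implementation of $f$, yet $p_{a_i}-p'_{a_i}=-\epsilon\ne 0=p_{a_k}-p'_{a_k}$, contradicting revenue equivalence.

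The principal obstacle is controlling the coupling between $C$ and $\mathcal{A}\setminus C$ under a non-uniform shift: without strict slack in the cross-$\beta$ IC constraints in both directions, even a tiny perturbation within $C$ could violate an IC arc coupling $C$ to some $a_l$ with $\beta(a_l)\ne\beta(a_i)$. Parts~(3) and~(4) of Definition~\ref{def:Generically-implementable} are calibrated precisely to rule this out, which is why generic implementability is the right hypothesis for the characterization; the within-class portion of the argument is then a polyhedral analysis in the spirit of Vohra's original proof.
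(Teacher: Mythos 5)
Your sufficiency direction is essentially the paper's Claim~\ref{claim-rev-equiv-sufficieny}: both rest on the arc inequality $p_{a_j}-p_{a_{j'}}\le l(j,j')$ (Claim~\ref{claim-delta-neq-neg-infty} together with IR, BF, NPT) summed along shortest paths; the paper pins each difference to $\Delta_{G_{f,\;\mathcal{B}}}(i,k)$ exactly, while you compare the two payment schemes directly --- a cosmetic difference. In the necessity direction you follow the paper's overall strategy (take a payment $p$ generically implementing $f$ at $a_i$ and raise it by $\epsilon$ on a suitable subset of the class $C=\{a:\beta(a)=\beta(a_i)\}$, with parts~(2)--(4) of Definition~\ref{def:Generically-implementable} keeping IR/BF and the cross-class IC constraints slack), but your combinatorial device differs: the paper's Claim~\ref{claim-rev-equiv-necessary} forms the graph of $\delta$-tight arcs on $C$, shows $i$ and $k$ lie in distinct strongly connected components, and shifts a sink component $K$, whereas you shift $S=\{a\in C:\,p_{a_i}-p_a=\Delta_{G_{f,\;\mathcal{B}}}(i,a)\}$, whose closure under tight arcs and strict slack on arcs leaving it follow from the triangle inequality and $p_a-p_b\le\Delta_{G_{f,\;\mathcal{B}}}(a,b)$. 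Your version is somewhat more direct (no SCC machinery, and the violation is exhibited on the original pair $(a_i,a_k)$ rather than on some pair straddling $K$), and it is closer in spirit to the shortest-path potentials of~\cite{Vohra-RevEquiv} while still using the given nonnegative payment as base point, which is precisely how the paper sidesteps the NPT obstruction to negative shortest-path payments. Two small points should be made explicit: part~(2) of Definition~\ref{def:Generically-implementable} gives $0<p_{a_j}<\theta(a_j)$ only when $\beta(a_i)>0$, so you need the paper's observation that a pair with $\beta(a_i)=\beta(a_k)=0$ can never violate antisymmetry (there $\theta=0$ forces $l(i,k),\,l(k,i)\le 0$, and Theorem~\ref{thm:necessary-cond-private-budgets} then makes both distances zero), hence your contrapositive hypothesis already implies $\beta(a_i)>0$; and your conclusion needs $a_i\in S$, i.e.\ the convention $\Delta_{G_{f,\;\mathcal{B}}}(i,i)=0$. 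With these remarks added, your argument is correct at the same level of rigor as the paper's (including the same implicit uniform-$\epsilon$ use of part~(4) for deviations toward outcomes of larger $\beta$).
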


Observe that $f: \mathcal{V} \times \mathcal{B}\rightarrow \mathcal{A}$ is implementable, and thus the proof of Theorem~\ref{thm:rev-equiv} only requires the necessary condition in Theorem~\ref{thm:necessary-cond-private-budgets}. 
Therefore, Theorem~\ref{thm:rev-equiv} is applicable for  private budget settings (where players can report any budget limit) satisfying the condition in Definition~\ref{def:Generically-implementable-at-all-outcomes}.

Essentially, the original proof of~\cite{Vohra-RevEquiv} constructs certain payments based on the specific  shortest paths of the graph. However, if $\Delta_{G_{f, \; \mathcal{B}}}(i, k)= -\Delta_{G_{f, \; \mathcal{B}}}(k, i)$ holds then this implies that these payments can be negative. 
In our setting, negative payments are excluded by the no positive transfer requirement. Instead, 
our necessity proof (Claim~\ref{claim-rev-equiv-necessary}) requires a subtle strongly-component argument and thus is different from theirs.  
Our sufficiency proof (Claim~\ref{claim-rev-equiv-sufficieny}) is based on a straight-forward adaptation of~\cite{Vohra-RevEquiv} and on 
Theorem~\ref{thm:necessary-cond-private-budgets}.

\section{Examples}\label{Sec:example-and-application}

This section illustrates our characterization results through some examples.

\begin{example}
\label{EXAMPLE:G-vs-widehat-G}
{\rm 
We show that the no budget over-reporting assumption is crucial to the sufficiency in Theorem~\ref{thm:suff-private-budgets-no-over-report}.
Consider a single player with possible outcomes $\mathcal{A}=\{a, a'\}$, possible private budgets $\mathcal{B}=\{10, 20\}$, and possible values $\mathcal{V}=\{v, v'\}$, where $v(a)=20, v(a')=10$ and $v'(a)=11, v'(a')=0$. 
Let $f(v, 20)=f(v',20)=a$ and $f(v, 10)=f(v',10)=a'$. 
If over reporting the budget is not allowed, then the payment function $p_{a}=10$ and $p_{a'}=0$ implements $f$.
However, if reporting any budget is allowed then no payment implements $f$. To see this, observe that $p_{a'}$ must be $0$ since  
$f(v',10)=a'$. Then, since $f(v, 20)=a$ we have $10 = v(a)-v(a') \ge p_a - p_{a'}$, and therefore $p_a \le 10$. But then if the true type of the player is $(v', 10)$ it is beneficial to misreport $(v', 20)$. Observe that $G_{f, \; \mathcal{B}}$ contains no negative cycle, since 
$\min\{\delta(a, a'), \; \theta(a, a')\}=10$,
$\min\{\delta(a', a), \; \theta(a', a)\}=\min\{\infty, 0\} = 0$.

In this example we also have that 
$\widehat{G}_{f, \; \mathcal{B}}$ contains a negative cycle (since $\min\{\widehat{\delta}(a, a'), \; \theta(a, a')\}=10$,
$\min\{\widehat{\delta}(a', a), \; \theta(a', a)\}=-11$) and thus 
$\widehat{G}_{f, \; \mathcal{B}} \neq G_{f, \; \mathcal{B}}$.
}
\end{example}

\begin{example}
\label{EXAMPLE:non-necessity-of-widehat-G}
{\rm 
We show an implementable $f$ whose corresponding graph $\widehat{G}_{f, \; \mathcal{B}}$ contains a negative cycle (and therefore 
the condition in Proposition~\ref{prop:suff-private-budget}
is  sufficient but not necessary for implementability). 
Consider a single player with possible outcomes $\mathcal{A}=\{a, a'\}$, possible private budgets $\mathcal{B}=\{1, 5\}$, and possible values $\mathcal{V}=\{v, v'\}$, where $v(a)=20, v(a')=10$ and $v'(a)=10, v'(a')=0$. 
Let $f(v, 5)=f(v',5)=a$ and $f(v, 1)=f(v',1)=a'$. 
Clearly, the payment function $p_{a}=5$ and $p_{a'}=0$ implements $f$. However, $\min\{\widehat{\delta}(a, a'), \; \theta(a, a')\}=5$,
$\min\{\widehat{\delta}(a', a), \; \theta(a', a)\}=-10$, that is $(a, a')$ is a negative cycle in $\widehat{G}_{f, \; \mathcal{B}}$.
To demonstrate Claim~\ref{claim-delta-neq-neg-infty} notice that 
$\widehat{\delta}(a', a) < p_{a'} - p_a = -5 \le \delta(a', a)=\infty$. 

}
\end{example}

We next consider  a broad class of social choice functions  
known as affine maximizers.
This class  encompasses the extensively studied class of weighted VCG mechanisms. 
We begin with the definition of  this class.

\begin{definition}
A social choice function $f$ is called an {\rm affine maximizer}
if for some player weights $\kappa_1 > 0, \ldots, \kappa_n > 0$
and some outcome weights $\gamma_{a} \in \bbbr$ for every ${a} \in \mathcal{A}$,
we have that:
\[
f(v_1, \ldots, v_n) \in \mbox{\rm argmax}_{ \ a \in \mathcal{A}\;} 
\{\sum_\ell \kappa_\ell \cdot v_\ell(a) + \gamma_{a}\}.
\]
\end{definition}

We next consider a natural  mechanism without money for 0/1 valuations (intuitively, each player can vote for all his most preferred alternatives, assuming all are equally desired).

\begin{claim}\label{claim-VCG-possibility-without-money}
The mechanism without money 
$f(v_1, \ldots, v_n) \in \mbox{\rm argmax}_{ \ a \in \mathcal{A}\;} \{\sum_\ell \kappa_\ell \cdot v_\ell(a) + \gamma_{a}\}$, where  
$v_\ell(a) \in \{0, 1\}$ for every $a \in \mathcal{A}$ and $\ell \in \{1, \ldots, n\}$ 
(assuming that ties among outcomes are broken lexicographically)
satisfies the condition in Proposition~\ref{thm-Characterization-for-zero-budgets} 
and therefore  is  dominant-strategy implementable. 
\end{claim}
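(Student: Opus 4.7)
My plan is to verify the hypothesis of Proposition~\ref{thm-Characterization-for-zero-budgets}, namely that $\delta(a,a') \ge 0$ for every pair $a,a' \in \mathcal{A}$. Adopting the single-agent viewpoint of Subsection~\ref{SUB-SEC:Presentation-Assumptions}, I fix the distinguished player $\ell$ and the reports $v_{-\ell}$ of the others, and abbreviate $S(b) := \sum_{k \ne \ell}\kappa_k v_k(b) + \gamma_b$, so that the affine objective at outcome $b$ when player $\ell$ reports $v_\ell$ is exactly $\kappa_\ell v_\ell(b) + S(b)$.

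Let $v_\ell \in \mathcal{V}_\ell$ satisfy $f(v_\ell,v_{-\ell}) = a$. Since values lie in $\{0,1\}$, the inequality $v_\ell(a) \ge v_\ell(a')$ can fail only in the case $v_\ell(a) = 0$ and $v_\ell(a') = 1$; I assume this and derive a contradiction. That $a$ is the tie-broken argmax at $v_\ell$ gives $S(a) \ge \kappa_\ell + S(a')$, i.e., $S(a) - S(a') \ge \kappa_\ell$. Since $f(\cdot,v_{-\ell})$ is onto (the presentation assumption), some $v'_\ell$ achieves $f(v'_\ell,v_{-\ell}) = a'$, and the analogous comparison at $v'_\ell$ yields $\kappa_\ell(v'_\ell(a')-v'_\ell(a)) \ge S(a)-S(a') \ge \kappa_\ell$. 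The $\{0,1\}$ restriction on $v'_\ell$ then forces $v'_\ell(a)=0$, $v'_\ell(a')=1$, and pinches $S(a)-S(a') = \kappa_\ell$ exactly.

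This forced equality is where the real obstacle sits. Under $v_\ell$ the sums at $a$ and $a'$ both equal $S(a)$, placing both outcomes in the argmax; since $f(v_\ell)=a$, lex tie-breaking forces $a <_{\mathrm{lex}} a'$. Under $v'_\ell$ the sums at $a$ and $a'$ are once again both $S(a)$ (using $v'_\ell(a)=0$, $v'_\ell(a')=1$), so $a$ also lies in the argmax at $v'_\ell$; lex tie-breaking must then select $a$, contradicting $f(v'_\ell)=a'$. The delicate point is precisely this equality subcase — the 0/1 restriction collapses the affine margin to exactly $\kappa_\ell$, and a deterministic lex rule cannot pick $a$ over $a'$ at one report while picking $a'$ over $a$ at another when both argmax sets contain $\{a,a'\}$. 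The contradiction yields $v_\ell(a) \ge v_\ell(a')$; taking the infimum over $v_\ell$ with $f(v_\ell)=a$ gives $\delta(a,a') \ge 0$, and Proposition~\ref{thm-Characterization-for-zero-budgets} concludes that $f$ is dominant-strategy implementable.
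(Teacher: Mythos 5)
Your verification of $\delta(a,a')\ge 0$ is correct, and the invocation of Proposition~\ref{thm-Characterization-for-zero-budgets} then finishes the claim. Your route differs from the paper's in organization: the paper first reduces without loss of generality to two outcomes and then enumerates cases on the others' aggregate contribution $w(a)-w(b)$ (tie, gap of one, larger gaps), explicitly computing the pairwise $\delta$ values in each regime and flagging the regimes where the restricted $f$ is not onto; you instead argue by contradiction for an arbitrary pair among arbitrarily many outcomes, pinching $S(a)-S(a')$ to exactly $\kappa_\ell$ via a witness report for $a'$, and then showing that both reports produce ties that lexicographic tie-breaking cannot resolve inconsistently. The two arguments rest on the same observation -- with $\{0,1\}$ values the only dangerous configuration is a tie, and a deterministic lexicographic rule breaks ties the same way at both reports -- but your version buys uniform treatment of general weights $\kappa_\ell$ (the paper's thresholds $w(a)+1=w(b)$ implicitly normalize the player's weight) and avoids the two-outcome reduction, while the paper's case analysis yields the explicit values of $\delta$ in each regime. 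One cosmetic point: at $v'_\ell$ the lexicographic rule need not select $a$ itself if a lex-smaller third outcome also lies in the argmax, but since the selected outcome is lexicographically no larger than $a$ and $a<_{\mathrm{lex}}a'$, it still cannot be $a'$, so your contradiction stands.
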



As stated informally in~\cite{Nicole-Budget} no 
affine maximizer  is implementable with private budgets  if the players can specify  their budget limits 
 in addition to reporting their valuations (since players can 
over-report their values while under-reporting their budgets to avoid charges). 
We use  our characterization  to address the publicly-known budget case (where  
 players cannot misreport their budgets).

\begin{claim}\label{prop-VCG-impossibility-public}
Suppose that $\mathcal{V}_\ell = \bbbr_+^{|\mathcal{A}|}$ for every player $\ell$ and 
let $n \ge 2$.   
If at least one player has budget $ < \infty$, then no 
affine maximizer  is implementable with publicly-known  budgets.
\end{claim}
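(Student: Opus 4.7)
The plan is to proceed by contradiction: assume that some affine maximizer $f$ (with player weights $\kappa_\ell > 0$ and outcome weights $\gamma_a \in \bbbr$) is implemented by a payment rule $p$ satisfying IC, IR, BF, NPT under publicly-known budgets with, say, player $1$'s budget $B_1 < \infty$. The core idea is that although players cannot misreport budgets, another player's valuation (unbounded on the domain $\bbbr_+^{|\mathcal{A}|}$) can force an externality on player $1$ that grows without bound, pushing the required payment $p_1$ above $B_1$ and violating budget feasibility. Since the claim is vacuous when $|\mathcal{A}| = 1$, I will assume $|\mathcal{A}| \ge 2$ and pick two distinct outcomes $a, a' \in \mathcal{A}$.

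I will construct a valuation profile for the remaining players parametrized by a free scalar $M > 0$: set $v_j \equiv \vec{0}$ for $j \ge 3$, and let $v_2(a') = M$ with $v_2(\tilde a) = 0$ for $\tilde a \ne a'$. Since $\kappa_2 > 0$, for $M$ large the affine objective at $a'$ dominates the objective at every fixed outcome outside $\{a, a'\}$, so the affine maximizer picks either $a$ or $a'$. A direct calculation isolates the threshold
\[
T \;=\; T(M) \;=\; \frac{\kappa_2 M + \gamma_{a'} - \gamma_a}{\kappa_1},
\]
so that $f$ chooses $a$ (resp.\ $a'$) when $v_1(a) - v_1(a') > T$ (resp.\ $< T$). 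For small $\epsilon > 0$ and $M$ large enough so that $T > \epsilon$, I take player $1$'s reports $v_1$ with $v_1(a) = T + \epsilon$, $v_1(a') = 0$, $v_1(\tilde a) = 0$, and $v_1'$ with $v_1'(a) = T - \epsilon$, $v_1'(a') = 0$, $v_1'(\tilde a) = 0$; both lie in $\mathcal{V}_1 = \bbbr_+^{|\mathcal{A}|}$, and by the same threshold computation $a$ (resp.\ $a'$) is the unique affine maximizer at $(v_1, v_{-1})$ (resp.\ $(v_1', v_{-1})$), so $f(v_1, v_{-1}) = a$ and $f(v_1', v_{-1}) = a'$.

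The finish is short. BF applied at type $v_1$ gives $p_1(v_1, v_{-1}) \le B_1$, which in particular makes the misreport $v_1$ admissible for type $v_1'$ under the IC constraint. Writing IC for type $v_1'$ against the misreport $v_1$ yields $v_1'(a') - p_1(v_1', v_{-1}) \ge v_1'(a) - p_1(v_1, v_{-1})$, i.e. $p_1(v_1, v_{-1}) - p_1(v_1', v_{-1}) \ge T - \epsilon$; combining with NPT ($p_1(v_1', v_{-1}) \ge 0$) gives $p_1(v_1, v_{-1}) \ge T - \epsilon$. Together with BF this forces $T \le B_1 + \epsilon$, which fails for $M$ large enough since $T(M) \to \infty$, a contradiction. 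The only non-routine step is verifying that the affine maximizer indeed returns $a$ at $(v_1, v_{-1})$ and $a'$ at $(v_1', v_{-1})$ without any competing outcome $\tilde a$ tying or beating them; this is handled by choosing $M$ so large that $a'$ strictly dominates every other $\tilde a$ on the fixed coordinates, together with the strict $\epsilon$ gap between $a$ and $a'$.
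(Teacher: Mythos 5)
Your proof is correct, and it reaches the conclusion by a genuinely more elementary route than the paper. The paper fixes the switching threshold at $B_1+1$ through player 1's own valuation, balances player 2's valuations so that the two relevant outcomes tie, reads off the graph quantities $\delta_1(b,a)\le -(B_1+1)$ and $\theta_1(a)=B_1$, and then invokes the necessity direction of its characterization (a negative two-cycle in $G_{f,\mathcal{B}}$ rules out implementability). You bypass the graph machinery entirely: you let player 2's value $M$ drive the threshold $T(M)\to\infty$, make the argmax strictly unique at both profiles (so tie-breaking in the affine maximizer is irrelevant), and derive the contradiction directly from IC, BF and NPT --- in effect re-deriving, for this single instance, the inequality $p_{a}-p_{a'}\ge \delta(a,a')$ and the cap $p_a\le\theta(a)$ that the paper encodes as arc lengths (its Claim~\ref{claim-delta-neq-neg-infty} and the definition of $\theta$). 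What the paper's route buys is a demonstration of the characterization theorem in action, which is the purpose of its examples section; what yours buys is self-containedness and robustness, since you never handle the infimum-based quantities $\delta$, $\theta$ or rely on perturbation-of-a-tie arguments. One cosmetic remark: the case $|\mathcal{A}|=1$ is not vacuous but must be excluded --- a constant affine maximizer with zero payments is trivially implementable --- so, exactly like the paper (which silently picks two outcomes $a,b$), you are implicitly assuming at least two outcomes; this is a shared and harmless implicit hypothesis, not a gap.
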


\section*{Acknowledgments}
I thank Sushil Bikhchandani, Shahar Dobzinski, Federico  Echenique,  David Kempe, Ron Lavi, John  Ledyard, Alexander Linden, Noam Nisan and Motty Perry  for early discussions and helpful suggestions.

\section{Appendix: Proofs}

\subsection{Proof of Theorem~\ref{thm:necessary-cond-private-budgets}}

The proof of Theorem~\ref{thm:necessary-cond-private-budgets} 
is the consequence of the following taxation principle lemma and claims.

\begin{lemma}[\cite{Nicole-Budget}]\label{lemma:taxation-Principle} 
Let  $(f, \; p)$ be an implementable mechanism for private budgets. 
Let 
\[
\mathcal{A}(v_{-\ell}, B_{-\ell}) = \{ a_k \in \mathcal{A} \ | 
\ \exists v'_\ell \in \mathcal{V}_l  \mbox { and } \exists B'_\ell \in \mathcal{B}_{\ell}  
\mbox{ such that } f((v'_\ell, v_{-\ell}), (B'_\ell, B_{-\ell})) = a_k \}.
\]
The following conditions hold: 
\end{lemma}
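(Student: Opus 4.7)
The plan is to establish the taxation principle in two parts: (a) the payment $p_\ell((v_\ell,v_{-\ell}),(B_\ell,B_{-\ell}))$ depends only on the outcome $f$ selects, so we may write a common value $p_{a_k}$ for each $a_k \in \mathcal{A}(v_{-\ell},B_{-\ell})$; and (b) on any truthful report with $f$ selecting $a$, the outcome $a$ solves the menu optimization $\max\{v_\ell(a')-p_{a'} : a' \in \mathcal{A}(v_{-\ell},B_{-\ell}),\ p_{a'} \le B_\ell\}$. Both conditions follow from (IC), (BF), and the standing assumption that reports of all other players are held fixed.

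For (a), suppose $(v_\ell,B_\ell)$ and $(v'_\ell,B'_\ell)$ both induce $a_k$ and, without loss of generality, $B_\ell \le B'_\ell$. By (BF) applied at the first report, $p_\ell((v_\ell,v_{-\ell}),(B_\ell,B_{-\ell})) \le B_\ell \le B'_\ell$, so type $(v'_\ell,B'_\ell)$ can afford the deviation to $(v_\ell,B_\ell)$; since both reports produce $a_k$, (IC) for this type forces $p_\ell((v'_\ell,v_{-\ell}),(B'_\ell,B_{-\ell})) \le p_\ell((v_\ell,v_{-\ell}),(B_\ell,B_{-\ell}))$, which in particular shows the larger-budget payment is also bounded by $B_\ell$. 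Type $(v_\ell,B_\ell)$ can therefore afford the reverse deviation, and (IC) applied in that direction yields the matching inequality, so the two payments coincide and we denote their common value by $p_{a_k}$. For (b), fix a truthful report $(v_\ell,B_\ell)$ producing $a$, and consider any $a_k \in \mathcal{A}(v_{-\ell},B_{-\ell})$ with $p_{a_k} \le B_\ell$; pick a witness $(\tilde v_\ell,\tilde B_\ell)$ with $f((\tilde v_\ell,v_{-\ell}),(\tilde B_\ell,B_{-\ell}))=a_k$, observe that by part (a) its payment equals $p_{a_k} \le B_\ell$, so the deviation is budget-feasible for type $(v_\ell,B_\ell)$, and (IC) gives $v_\ell(a)-p_a \ge v_\ell(a_k)-p_{a_k}$.

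The main obstacle is the bootstrapping in part (a): we cannot apply (IC) symmetrically from the outset, because a priori the payment at the larger-budget report could exceed the smaller budget and render one of the candidate deviations infeasible. The trick is to use (BF) at the smaller-budget report to certify one direction is feasible, invoke (IC) once to obtain an inequality that forces the larger-budget payment below $B_\ell$ as well, and only then apply (IC) in the reverse direction. Everything else is a routine combination of feasibility and the (IC) inequality, and the (NPT) and (IR) conditions are not needed for either statement.
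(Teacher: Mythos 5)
Your proof is correct and follows essentially the same route as the paper: both parts are deduced from (IC) together with (BF) by exhibiting budget-feasible deviations between two reports that yield the same outcome, and (IR)/(NPT) indeed play no role. The only cosmetic difference is in part (a): the paper takes the report with the \emph{larger payment} as the true type, so (BF) at that report immediately certifies the deviation to the cheaper report is affordable and a single application of (IC) gives the contradiction, whereas your choice to order by \emph{budget} necessitates the two-step bootstrap with (IC) applied in both directions.
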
 
\begin{enumerate}
\item
For every $\ell$, every $v_{-\ell}$ and every $B_{-\ell}$ 
there exist prices $p_{a_k} \in \bbbr$, 
for every $a_k \in \mathcal{A}(v_{-\ell}, B_{-\ell})$, such that for all 
$v_\ell$ and all $B_\ell$ with 
$f((v_\ell, v_{-\ell}), (B_\ell, B_{-\ell})) = a_k$ 
we have that 
$p((v_\ell, v_{-\ell}), (B_\ell, B_{-\ell})) = p_{a_k} \le B_\ell$. 
\item
For every $v_\ell$ and every $B_\ell$  we have that 
\[ 
f((v_\ell, v_{-\ell}), (B_\ell, B_{-\ell})) \in 
\mbox{\rm argmax}_{ \ a_k \in S \; } 
\{v_\ell(a_k) - p_{a_k}\}, 
\]
where $S = \{ a_k \in  \mathcal{A}(v_{-\ell}, B_{-\ell}) \; | \;   p_{a_k} \le B_{\ell} \}$
\end{enumerate}

\begin{proof}
Let $(v, B)=((v_\ell, \ v_{-\ell}), (B_\ell, \ B_{-\ell}))$ and
$(v', B')=((v'_\ell, \ v_{-\ell}), (B'_\ell, \ B_{-\ell}))$. 

First, suppose that $f(v, B) = f(v', B')$ 
but $p_\ell(v, B) > p_\ell(v', B')$. 
Then, $v_\ell(f(v, B)) \ - \ p_\ell(v, B) < v_\ell(f(v', B)) \ - \ p_\ell(v', B')$,
while by budget feasibility (BF) we have 
$p_\ell(v', B') < p_\ell(v, B) \le B_\ell$,
contradicting incentive compatibility (IC). 

Second, suppose that 
$f(v, B) \notin 
\mbox{\rm argmax}_{ \ a_k \in S\;}
 \{v_\ell(a_k) - p_{a_k}\}$ 
 and let $f(v', B') = a_i$ where $a_i \in \mbox{\rm argmax}_{ \ a_k \in S\;} \{v_\ell(a_k) - p_{a_k}\}$.  
As before,  $v_\ell(f(v, B)) \ - \ p_\ell(v, B) < v_\ell(f(v', B)) \ - \ p_\ell(v', B)$, 
where   $p_\ell(v', B') = p_{a_i}\le B_\ell$, 
contradicting incentive compatibility (IC). \qed 
\end{proof}

\begin{claim}\label{claim-delta-neq-neg-infty}
If $f: \mathcal{V} \times \mathcal{B}\rightarrow \mathcal{A}$ is implementable then $\delta(a, a')\ge  p_{a} - p_{a'} > -\infty$, for all
$a, a' \in \mathcal{A}$. 
\end{claim}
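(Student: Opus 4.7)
The plan is to derive both inequalities directly from the Taxation Principle (Lemma~\ref{lemma:taxation-Principle}) combined with the Budget Feasibility and Incentive Compatibility requirements.

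First I would invoke Lemma~\ref{lemma:taxation-Principle} to obtain, for each outcome $a_k \in \mathcal{A}$ (which equals $\mathcal{A}(v_{-\ell}, B_{-\ell})$ under the presentation assumptions in Subsection~\ref{SUB-SEC:Presentation-Assumptions}), a well-defined real-valued price $p_{a_k} \in \bbbr$. Since both $p_a$ and $p_{a'}$ are then finite real numbers, the inequality $p_a - p_{a'} > -\infty$ is immediate. This handles the easy half of the statement.

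For the main inequality $\delta(a, a') \ge p_a - p_{a'}$, I would first establish the bound $p_{a'} \le \beta(a')$. Indeed, by Lemma~\ref{lemma:taxation-Principle} every $(v^*, B^*)$ with $f(v^*, B^*) = a'$ has payment exactly $p_{a'}$, and by Budget Feasibility this payment is at most $B^*$; taking the infimum over all such $B^*$ yields $p_{a'} \le \beta(a')$. Now pick any $(v, B)$ with $f(v, B) = a$ and $B \in \mathcal{B} \cap [\beta(a'), \infty)$ (if no such pair exists, the infimum defining $\delta(a, a')$ is $\infty$ and the inequality is trivial). Since $B \ge \beta(a') \ge p_{a'}$, the hypothetical deviation in which the player misreports so as to be assigned outcome $a'$ (which is possible because such a report exists) has payment $p_{a'} \le B_\ell$, so Incentive Compatibility applies and gives
\[
v(a) - p_a \ \ge \ v(a') - p_{a'}, \qquad \text{i.e., } \ v(a) - v(a') \ \ge \ p_a - p_{a'}.
\]
Taking the infimum of the left-hand side over all admissible $(v, B)$ produces exactly $\delta(a, a') \ge p_a - p_{a'}$.

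The main subtlety to be careful about is the conditional nature of the IC axiom in the private-budget model: one is only entitled to the IC inequality for deviations whose payment is within the true budget. The key observation that unlocks the proof is therefore the intermediate bound $p_{a'} \le \beta(a')$, which together with the budget restriction $B \ge \beta(a')$ baked into the definition of $\delta(a, a')$ (see~(\ref{eqn:delta-private})) guarantees that the deviation toward $a'$ is indeed budget-feasible for the deviator, so that IC legitimately kicks in. Everything else is a routine rearrangement and an application of the convention that $\inf \emptyset = \infty$.
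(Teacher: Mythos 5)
Your proposal is correct and follows essentially the same route as the paper's own proof: both rest on the taxation principle to get well-defined finite prices, on budget feasibility to obtain $p_{a'} \le \beta(a')$ so that the restriction $B \ge \beta(a')$ in~(\ref{eqn:delta-private}) makes the deviation toward $a'$ budget-feasible, and then on IC to get $v(a)-v(a') \ge p_a - p_{a'}$. The only cosmetic difference is that the paper approximates the infimum with an $\epsilon$-argument while you take the pointwise inequality and pass to the infimum directly, which is equivalent.
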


\begin{proof}
If $\delta(a, a') = \infty$, the claim trivially holds. 
Otherwise, for every sufficiently small $\epsilon > 0$ there exist $v$ and $B \ge \beta(a')$, 
where $f(v, B)=a$ and $v(a)-v(a')\le \delta(a, a') + \epsilon$. 

By Lemma~\ref{lemma:taxation-Principle}, no positive transfer (NPT) and~(\ref{eqn:delta-private}), we have $p(v, B) = p_{a}$, where
 $0 \le p_{a} \le \beta(a) \le B$ and  $0 \le p_{a'} \le \beta(a') \le B$. 
 
Now, 
$\delta(a, a') + \epsilon \ge v(a) - v(a') \ge p_{a} - p_{a'}$
by incentive compatibility (IC). Therefore,
$\delta(a, a')  \ge p_{a} - p_{a'} > -\infty$, as required.  \qed 
\end{proof}

\begin{claim}\label{claim-private-no-negative-cycle-through-1-|A|}
Let $C$ be a finite cycle in the graph $G_{f, \mathcal{B}}$. 
If $f: \mathcal{V} \times \mathcal{B}\rightarrow \mathcal{A}$ is implementable and   $l(i, k) = \delta(a_i, a_k)$ 
for every arc $(i, k) \in C$ then $C$ is a nonnegative cycle. 
\end{claim}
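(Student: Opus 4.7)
\textbf{Proof proposal for Claim~\ref{claim-private-no-negative-cycle-through-1-|A|}.}

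The plan is to express the length of the cycle as a telescoping sum of price differences and use Claim~\ref{claim-delta-neq-neg-infty} to bound each arc length from below. Since $f$ is implementable, the taxation principle (Lemma~\ref{lemma:taxation-Principle}), together with the presentation assumption that $f$ is onto, assigns a well-defined price $p_{a}$ to every outcome $a \in \mathcal{A}$. Claim~\ref{claim-delta-neq-neg-infty} then gives, for every pair $a, a' \in \mathcal{A}$, the inequality $\delta(a, a') \ge p_{a} - p_{a'}$.

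Write $C = (i_1, i_2, \ldots, i_m, i_1)$ with indices taken cyclically so that $i_{m+1} = i_1$. By hypothesis, every arc $(i_j, i_{j+1})$ of $C$ satisfies $l(i_j, i_{j+1}) = \delta(a_{i_j}, a_{i_{j+1}})$. Therefore
\[
l(C) \;=\; \sum_{j=1}^{m} l(i_j, i_{j+1}) \;=\; \sum_{j=1}^{m} \delta(a_{i_j}, a_{i_{j+1}}) \;\ge\; \sum_{j=1}^{m} \bigl(p_{a_{i_j}} - p_{a_{i_{j+1}}}\bigr) \;=\; 0,
\]
where the last equality is the telescoping of the cyclic sum. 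Hence $C$ is nonnegative.

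There is essentially no hard step here: the work was already done in establishing Claim~\ref{claim-delta-neq-neg-infty}. The only subtlety to watch for is to make sure that the prices $p_{a_{i_j}}$ are simultaneously well-defined, i.e.\ that the taxation principle gives a single price for each outcome across all reports that produce it. This is exactly part~(1) of Lemma~\ref{lemma:taxation-Principle}, combined with the presentation assumption that $f$ is onto so that each $a_{i_j}$ lies in the range of $f$ for the fixed $(v_{-\ell}, B_{-\ell})$. Note also that the hypothesis $l(i_j, i_{j+1}) = \delta(a_{i_j}, a_{i_{j+1}})$ on every arc is essential: if some arc had $l(i_j, i_{j+1}) = \theta(a_{i_j}) < \delta(a_{i_j}, a_{i_{j+1}})$, the bound $l(i_j, i_{j+1}) \ge p_{a_{i_j}} - p_{a_{i_{j+1}}}$ would no longer follow from Claim~\ref{claim-delta-neq-neg-infty}, and the telescoping argument would break.
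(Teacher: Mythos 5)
Your proof is correct and follows essentially the same route as the paper: bound each arc length below by a price difference via Claim~\ref{claim-delta-neq-neg-infty} and let the cyclic sum telescope to zero. The only cosmetic difference is that the paper argues by contradiction and first reduces to a simple cycle (a reduction it later reuses for Claim~\ref{claim-private-no-negative-cycle-through-zero}), whereas your direct telescoping argument handles arbitrary finite cycles without that step.
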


\begin{proof}
We prove the claim by contradiction. 
Suppose there exists a finite  negative  cycle $C$ 
in the graph $G_{f, \mathcal{B}}$
such that  $l(i, k) = \delta(a_i, a_k)$ 
for every arc $(i, k) \in C$.
We can assume without loss of generality that $C$ is  
a {\it simple cycle} (with no repeated nodes).
Otherwise, we can split it into two cycles, where one of them must clearly be negative. 
The process repeats until we are left with a negative  simple cycle. 

Without loss of generality (by renaming outcomes if necessary), we assume that
$C = (1, 2, \ldots, k)$, where $k \le |\mathcal{A}|$ . In particular, 
$l(1,2) + l(2, 3) + \cdots + l(k-1, k) + l(k, 1) = 
\delta(a_1, a_2)  + \cdots + \delta(a_{k-1}, a_{k}) + \delta(a_k, a_1) < 0$.

By Claim~\ref{claim-delta-neq-neg-infty}, $\delta(a_i, a_{i+1})  
\ge p_{a_i} - \; p_{a_{i+1}}$. Adding these inequalities together leads to
\[
\delta(a_1,a_2) + \delta(a_2, a_3) + \cdots + \delta(a_{k-1}, a_{k}) + \delta(a_k, a_1)  \ge p_{a_1} - p_{a_2} + p_{a_2} - p_{a_3} + \cdots - p_{a_1} =  0.
\]
 But this contradicts the assumption that $C$ is a 
 negative  cycle.\qed 
\end{proof}

\begin{claim}\label{claim-private-no-negative-cycle-through-zero}
Let $C$ be a finite cycle in the graph $G_{f, \mathcal{B}}$. 
If $f: \mathcal{V} \times \mathcal{B}\rightarrow \mathcal{A}$ is implementable and   $l(i, k) = \theta(a_i)$ 
for some arc $(i, k) \in C$ then $C$  is a nonnegative cycle.
\end{claim}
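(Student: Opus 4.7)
The plan is to reduce to the same kind of telescoping argument that drives Claim \ref{claim-private-no-negative-cycle-through-1-|A|}, by establishing the uniform arc inequality
\[
l(i, k) \ \ge \ p_{a_i} - p_{a_k}
\]
for every arc $(i, k)$ of $G_{f, \mathcal{B}}$, where $p_{a_j}$ denotes the (well-defined) implementing payment for outcome $a_j$ supplied by Lemma \ref{lemma:taxation-Principle}. Once this uniform inequality is in hand, summing around any cycle $C = (n_1, n_2, \ldots, n_r)$ (with $n_{r+1} = n_1$) yields
\[
\sum_{(i,k) \in C} l(i, k) \ \ge \ \sum_{i} \bigl(p_{a_{n_i}} - p_{a_{n_{i+1}}}\bigr) \ = \ 0
\]
by telescoping, which proves that $C$ is nonnegative.

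To verify the arc inequality, I split into cases according to which term achieves the minimum in $l(i,k) = \min\{\delta(a_i, a_k), \theta(a_i)\}$. If $l(i,k) = \delta(a_i, a_k)$, the inequality is exactly Claim \ref{claim-delta-neq-neg-infty}. If $l(i,k) = \theta(a_i)$, I use the taxation principle in Lemma \ref{lemma:taxation-Principle} together with budget feasibility (BF), which gives $p_{a_i} \le B$ for every $B$ with $f(v,B) = a_i$, and taking infimum over such $B$ yields $p_{a_i} \le \beta(a_i)$; individual rationality (IR) similarly gives $p_{a_i} \le v(a_i)$ whenever $f(v,B) = a_i$, so taking infimum over such $v$ yields $p_{a_i} \le \omega(a_i)$. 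Hence $p_{a_i} \le \theta(a_i)$. Combining with no positive transfer (NPT), which forces $p_{a_k} \ge 0$, we obtain $\theta(a_i) \ge p_{a_i} \ge p_{a_i} - p_{a_k}$, as required.

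No serious obstacle is anticipated: the argument is essentially mechanical once the arc inequality is in place, and no appeal to cycle simplicity is needed because the telescoping identity holds for arbitrary closed walks. I note that the hypothesis ``$l(i,k) = \theta(a_i)$ for some arc'' in the statement is not logically required by my argument; its role is simply to delineate this claim as the companion of Claim \ref{claim-private-no-negative-cycle-through-1-|A|}, so that the two together cover every cycle of $G_{f,\mathcal{B}}$ and thereby deliver the no-negative-cycle conclusion of Theorem \ref{thm:necessary-cond-private-budgets}.
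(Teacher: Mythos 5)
Your proof is correct. The key step, the uniform arc inequality $l(i,k) \ge p_{a_i} - p_{a_k}$, is sound: for the $\delta$-term it is exactly Claim~\ref{claim-delta-neq-neg-infty}, and for the $\theta$-term your infimum argument from Lemma~\ref{lemma:taxation-Principle} (BF gives $p_{a_i} \le B$ for every $B$ with $f(v,B)=a_i$, hence $p_{a_i}\le\beta(a_i)$; IR gives $p_{a_i}\le v(a_i)$, hence $p_{a_i}\le\omega(a_i)$) together with NPT delivers $\theta(a_i) \ge p_{a_i} \ge p_{a_i}-p_{a_k}$, so the minimum of the two terms dominates $p_{a_i}-p_{a_k}$ and telescoping finishes the job.

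Your route differs from the paper's in organization, and is in some respects cleaner. The paper argues by contradiction: it reduces a hypothetical negative cycle to a simple one, extracts a maximal run of $\delta$-arcs terminating in a $\theta$-arc $(k-1,k)$ with negative total length, deduces $p_{a_k} > \theta(a_k)$, and then refutes this with an $\epsilon$-argument against BF (if $\theta(a_k)=\beta(a_k)$) or IR (if $\theta(a_k)=\omega(a_k)$) --- i.e., it only establishes the bound $p_{a_k}\le\theta(a_k)$ implicitly, at the end of the extracted segment. You prove that same bound directly, once and for all, as a per-arc inequality, which lets you dispense with the contradiction, the simple-cycle reduction, and the segment decomposition; as you note, your argument does not even use the hypothesis that some arc has length $\theta(a_i)$, so it subsumes Claim~\ref{claim-private-no-negative-cycle-through-1-|A|} and yields Theorem~\ref{thm:necessary-cond-private-budgets} in one stroke. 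The ingredients are the same (Claim~\ref{claim-delta-neq-neg-infty}, BF, IR, NPT, telescoping), and indeed the paper itself uses precisely your uniform inequality $p_{a_j}-p_{a_{j+1}} \le \min\{\delta(a_j,a_{j+1}),\theta(a_j)\} = l(j,j+1)$ later, in the proof of Claim~\ref{claim-rev-equiv-sufficieny}; what your presentation buys is a shorter, non-case-split necessity proof, while the paper's case analysis makes explicit which axiom (BF versus IR) is violated when the bound fails.
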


\begin{proof}
Suppose the claim is false.  
Let $C$ be a negative cycle in the graph $G_{f, \mathcal{B}}$ 
and let $I^+ = \{ (i, k) \in C \; | \; l(i, k) = \theta(a_i)\}$.
By Claim~\ref{claim-private-no-negative-cycle-through-1-|A|}, the set  $I^+$ is nonempty. 

We can assume that $C$ is a simple cycle (by Claim~\ref{claim-private-no-negative-cycle-through-1-|A|} and its proof).  
Without loss of generality (by renaming outcomes if necessary), we can further assume that
$C = (1, 2, \ldots, m)$. Therefore,    
$l(1,2) + l(2, 3) + \cdots + l(m-1, m) + l(m, 1) < 0$.
Now, since $l(e) \ge 0$, for every arc $e \in I^+$, there must exist $i < k < m$
such that  $l(i, i+1) + l(i+1, i+2) + \cdots + l(k, k+1) < 0$, where 
  $(k, k+1) \in I^+$ and $(i, i+1), (i+1, i+2), \ldots, (k-1, k) \notin I^+$.
In particular, 
$l(i,i+1) + \cdots + l(k, k+1) 
= \delta(a_i,a_{i+1}) + \cdots + \delta(a_{k-1}, a_k) + \theta(a_k) < 0$.

By Claim~\ref{claim-delta-neq-neg-infty}, $0 > \delta(a_i, a_{i+1}) + \cdots + \delta(a_{k-1}, a_k)  + \theta(a_k)  \ge p_{a_i} - p_{a_k} + \theta(a_k)$.  This implies that  $p_{a_k}  > p_{a_i}  + \theta(a_k)$.
By no positive transfer (NPT), we have  $p_{a_i} \ge 0$ and 
therefore $p_{a_k}  > \theta(a_k)$. Consider two cases:

{\bf Case 1.}  Assume first that  $\theta(a_k) = \beta(a_k)$.  
By definition, there exist $v, B$ and 
a small enough $\epsilon \ge 0$ 
such that  $f(v, B) = a_k$, where $B = \beta(a_k) + \epsilon$.
Therefore, $p_{a_k} > \beta(a_k) + \epsilon$ , 
 contradicting budget feasibility (BF).

{\bf Case 2.}  Next assume that 
$\theta(a_k) = \omega(a_k)$. By definition, there exist $v, B$ and 
a small enough $\epsilon \ge 0$ 
such that  $f(v, B) = a_k$, where $v(a_k) = \omega(a_k) + \epsilon$.  
Therefore,  $p_{a_k}  > \omega(a_k) + \epsilon$, 
contradicting individual rationality (IR).\qed 
\end{proof}

\subsection{Proof of Theorem~\ref{thm:suff-private-budgets-no-over-report}}

The first direction is by Theorem~\ref{thm:necessary-cond-private-budgets}. The other direction follows from the next two claims. We start by defining one more graph.

\begin{definition}[The Graph $H_{f, \mathcal{B}}$]\label{def:Graph-H-private-no-over-report}
Let $M(H_{f, \mathcal{B}}) = \{0, 1, 2, \ldots , |\mathcal{A}|\}$ be 
the node set  of the graph, where 0 is a special node,
and  $\{ 1, 2, \ldots, |\mathcal{A}| \}$ correspond to all possible outcomes. 
Let ${E}(H_{f, \mathcal{B}}) = E_0 \cup E_1 \cup E_2$ 
be the directed arc set of the graph,  where  

\[
E_0 = \{ (i, k) \ | \  i, k > 0 \mbox{ and } \delta(a_i, a_k) < \infty  \},
\]
\[
E_1 = \{ (0, k) \ | \  k > 0 \}, 
\]
\[
E_2 = \{ (i, 0) \ | \  i >  0 \}.
\]

Finally, the {\it length} of a directed arc 
$(i, \; k) \in {E}(H_{f, \mathcal{B}})$ 
is defined as follows: 

\[ l_H(i, k) = \left\{ \begin{array}{ll}
        \delta(a_i, a_k) & \mbox{if $(i, k) \in E_0$}\\
        0 & \mbox{if $(i, k) \in E_1$} \\
        \theta(a_i) & \mbox{if $(i, k) \in E_2$}.\end{array} \right. \] 
\end{definition}

\begin{claim}\label{claim:no-neg-cycle-in-G-implies-no-neg-in-H}
If  $G_{f, \mathcal{B}}$ contains no finite negative cycles then
$H_{f, \mathcal{B}}$ contains no finite negative  cycles.
\end{claim}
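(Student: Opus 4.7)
The plan is to prove the contrapositive: assume $H_{f, \mathcal{B}}$ contains a finite negative cycle $C$, and construct a finite negative cycle in $G_{f, \mathcal{B}}$. The key observation driving the argument is that the $G$-arc length $l_G(i,k) = \min\{\delta(a_i,a_k),\theta(a_i)\}$ simultaneously dominates both routes available in $H_{f, \mathcal{B}}$: the direct $E_0$ arc of $H$-length $\delta(a_i,a_k)$, and the detour $(i, 0, k)$ via $E_2 \cup E_1$ of total $H$-length $\theta(a_i) + 0 = \theta(a_i)$. So whichever way $C$ accumulates cost, that cost can be charged to arc lengths in $G_{f, \mathcal{B}}$.

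First I would split into two cases based on whether $C$ visits node $0$. If $C$ avoids $0$, then every arc of $C$ comes from $E_0$, and the very same sequence of vertices is a cycle in the complete graph $G_{f, \mathcal{B}}$; on each such arc $l_G(i,k) \le \delta(a_i,a_k) = l_H(i,k)$, so the induced $G$-cycle has length $\le l_H(C) < 0$.

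If $C$ does visit node $0$, I would decompose $C$ at each occurrence of $0$ into closed sub-walks that start and end at $0$. Their $l_H$-lengths sum to $l_H(C) < 0$, so at least one sub-walk $C'$ is itself negative. Writing $C' = (0, v_1, v_2, \ldots, v_k, 0)$, the leading arc in $E_1$ contributes $0$, the closing arc in $E_2$ contributes $\theta(a_{v_k})$, and each intermediate arc in $E_0$ contributes $\delta(a_{v_i}, a_{v_{i+1}})$. I would then replace $C'$ by the cycle $(v_1, v_2, \ldots, v_k, v_1)$ in $G_{f, \mathcal{B}}$, which exists because $G_{f, \mathcal{B}}$ is complete on $\{1, \ldots, |\mathcal{A}|\}$ by Definition~\ref{def:G-f-graph}. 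Using $l_G(v_i, v_{i+1}) \le \delta(a_{v_i},a_{v_{i+1}})$ on the intermediate arcs and $l_G(v_k, v_1) \le \theta(a_{v_k})$ on the closing arc, the resulting $G$-cycle has length at most $l_H(C') < 0$, contradicting the hypothesis.

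I do not anticipate a real obstacle: the only step requiring mild care is the decomposition of a closed walk through $0$ into sub-walks, which is a standard graph-theoretic bookkeeping argument (additivity of arc lengths along walks, plus averaging to locate a negative piece). The finiteness of the resulting cycle follows because each intermediate $\delta$-value encountered is finite by virtue of belonging to $E_0$, and $\theta(a_{v_k}) < \infty$ since $f$ is onto, so all arc lengths in play remain finite throughout the argument.
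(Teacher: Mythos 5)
Your proposal is correct and follows essentially the same route as the paper: replace each $E_0$ arc of the negative $H$-cycle by the corresponding $G$-arc (using $l_G(i,k)\le\delta(a_i,a_k)$) and shortcut each passage $(i,0),(0,k)$ through node $0$ by the arc $(i,k)$ (using $l_G(i,k)\le\theta(a_i)=l_H(i,0)+l_H(0,k)$), yielding a $G$-cycle of length at most that of the $H$-cycle. The only difference is bookkeeping: the paper shortcuts the whole cycle at once rather than first splitting it at visits to node $0$ and selecting a negative closed sub-walk, and both versions are sound.
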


\begin{proof}
Suppose by contradiction that there exists 
a finite negative  cycle $C$ in $H_{f, \mathcal{B}}$. Since all arcs in 
${E}(H_{f, \mathcal{B}})$ have length $< \infty$,  
it suffices to show that there exists a finite cycle $C'$  
in $G_{f, \mathcal{B}}$ with a smaller length. 
First assume that $e \in E_0 \cap C$. 
Add $e$ to $C'$, and notice that 
$l(e) \le l_H(e)$. 
Next assume that $e = (i, 0) \in E_2 \cap C$ for some $i > 0$.
Clearly, the  consecutive arc in $C$ must have the form $e' = (0,k) \in E_1$ 
for some $k > 0$.
Add $(i, k)$ to $C'$, and notice that $l(i, k) \le l_H(i, 0) + l_H(0,k)$.\qed 
\end{proof}

\begin{claim}\label{claim-private-B-no-cycles-in-H-graph-implies-truth}
If $H_{f, \mathcal{B}}$ contains no finite negative  cycles
then $f: \mathcal{V} \times \mathcal{B} \rightarrow \mathcal{A}$ for private 
budgets with no budget over-reporting is implementable.
\end{claim}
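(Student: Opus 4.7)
\medskip

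\noindent\textbf{Proof plan for Claim~\ref{claim-private-B-no-cycles-in-H-graph-implies-truth}.}
The plan is to construct an explicit payment scheme from the shortest-path distances in $H_{f, \mathcal{B}}$ and then verify \textbf{(NPT)}, \textbf{(BF)}, \textbf{(IR)} and \textbf{(IC)} directly. Since $H_{f, \mathcal{B}}$ has no negative cycles and every arc has finite length, the shortest-path distance $\Delta_H(0, k)$ from the special node $0$ to every outcome node $k$ is well-defined. I define
\[
p_{a_k} \;=\; -\,\Delta_H(0, k) \qquad \text{for every } a_k \in \mathcal{A},
\]
and take $p(v, B) = p_{f(v, B)}$. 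Because the arc $(0, k) \in E_1$ has length $0$, we have $\Delta_H(0, k) \le 0$, so $p_{a_k} \ge 0$, which gives \textbf{(NPT)}. Next, the cycle $0 \to k \to 0$ has nonnegative length, so $\Delta_H(0, 0) = 0$, and the triangle inequality through $(k, 0) \in E_2$ yields
\[
0 \;=\; \Delta_H(0,0) \;\le\; \Delta_H(0, k) + l_H(k, 0) \;=\; \Delta_H(0, k) + \theta(a_k),
\]
hence $p_{a_k} \le \theta(a_k) \le \min\{\beta(a_k),\omega(a_k)\}$. Combined with the defining properties of $\beta$ and $\omega$, this will deliver \textbf{(BF)} (since whenever $f(v, B) = a_k$ one has $B \ge \beta(a_k) \ge p_{a_k}$) and \textbf{(IR)} (since $v(a_k) \ge \omega(a_k) \ge p_{a_k}$).

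The core step is \textbf{(IC)} under no over-reporting. Fix a truthful type $(v, B)$ with $f(v, B) = a_i$ and a deviation $(v', B')$ with $B' \le B$ producing $a_k = f(v', B')$; I only need to rule out a profitable deviation when $p_{a_k} \le B$. The key observation is that because $a_k$ is actually selected at reported budget $B'$, we have $\beta(a_k) \le B' \le B$, so the pair $(v, B)$ lies in the feasible set of the infimum defining $\delta(a_i, a_k)$ in~(\ref{eqn:delta-private}). Therefore
\[
v(a_i) - v(a_k) \;\ge\; \delta(a_i, a_k).
\]
Meanwhile, the arc $(i, k) \in E_0$ (which exists because $\delta(a_i, a_k) < \infty$ as witnessed by $(v, B)$) carries length $\delta(a_i, a_k)$, and the triangle inequality $\Delta_H(0, k) \le \Delta_H(0, i) + \delta(a_i, a_k)$ rearranges to
\[
\delta(a_i, a_k) \;\ge\; \Delta_H(0, k) - \Delta_H(0, i) \;=\; p_{a_i} - p_{a_k}.
\]
Chaining the two inequalities gives $v(a_i) - p_{a_i} \ge v(a_k) - p_{a_k}$, which is exactly \textbf{(IC)}.

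The only subtlety I foresee is making sure the argument really uses the no-over-reporting hypothesis in the right place; Example~\ref{EXAMPLE:G-vs-widehat-G} shows we cannot drop it. The role it plays here is precisely to force $B \ge \beta(a_k)$ for any reachable deviation target $a_k$, which in turn activates the relevant branch of the asymmetric definition~(\ref{eqn:delta-private}) of $\delta$. Once this point is pinned down, the rest of the verification is a routine combination of the triangle inequality with the built-in cap $l_H(i, 0) = \theta(a_i)$ that simultaneously enforces \textbf{(BF)}, \textbf{(IR)} and \textbf{(NPT)}. No additional graph-theoretic machinery beyond the standard shortest-path facts (already used in the proof of Theorem~\ref{thm:necessary-cond-private-budgets}) is needed.
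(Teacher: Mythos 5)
Your proof is correct and follows essentially the same route as the paper: payments are shortest-path distances in $H_{f,\mathcal{B}}$ anchored at the special node $0$, the arcs in $E_1$ and $E_2$ deliver NPT, IR and BF, and IC hinges on exactly the paper's key step that no over-reporting forces $\beta(a_k)\le B'\le B$, so $(v,B)$ witnesses $\delta(a_i,a_k)\le v(a_i)-v(a_k)$, after which a shortest-path (triangle) inequality closes the argument. The only cosmetic difference is the anchoring direction: you set $p_{a_k}=-\Delta_{H}(0,k)$ (distances out of node $0$) whereas the paper sets $p_{a_k}=\Delta_{H}(k,0)$ (distances into node $0$); both are valid and the verification is the same.
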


\begin{proof}
Consider the payment $p_{a_i} = \Delta_{H_{f, B}}(i,0)$, where 
$\Delta_{H_{f, B}}(i,0)$ denotes the length of the shortest
path from node $i > 0$ to node $0$ in the graph $H_{f, B}$.

Since $H_{f, \mathcal{B}}$ contains no negative  cycles, 
we have that $\Delta_{H_{f, B}}(i,0) > -\infty$. 
In addition,  $\Delta_{H_{f, B}}(i,0) \le l_H(i,0) = \theta(a_i) < \infty$, 
since $a_i$ can be obtained for some report of the player (recall our assumption that $f$ is onto).  This shows that  
the payment satisfies individual rationality (IR) and budget feasibility (BF). 
To show no positive transfer (NPT), recall that $l_H(0,i)=0$ 
and so $p_{a_i} = \Delta_{H_{f, B}}(i,0) =  l_H(0,i) +  \Delta_{H_{f, B}}(i,0) \ge  0$, since all cycles have nonnegative length.

To show incentive compatibility (IC), assume for the purpose of contradiction that the player can 
benefit from reporting $v', B'$ instead of its true value $v$ and true budget $B$. 
Specifically, $f(v, B) = a_i$ and  $f(v', B') = a_k$,  
but  $v(a_i) - \Delta_{H_{f, B}}(i,0) <  v(a_k) - \Delta_{H_{f, B}}(k,0)$. 
By budget feasibility at $(v, B)$ and $(v', B')$, 
we have  $\Delta_{H_{f, B}}(i,0) \le \beta(a_i) \le B$ 
and $\Delta_{H_{f, B}}(k,0) \le \beta(a_k) \le B'$.
Importantly, since the player cannot over-report its private budget we have  $\beta(a_k)\le B' \le B$, and therefore
$\delta(a_i, a_k)  \le v(a_i) - v(a_k)
< \infty$.

Now,  
$l_H(i,k) + \Delta_{H_{f, B}}(k,0) = 
\delta(a_i, a_k) + \Delta_{H_{f, B}}(k,0) \le  v(a_i) - v(a_k) + \Delta_{H_{f, B}}(k,0) < \Delta_{H_{f, B}}(i,0)$.
Notice that  $l_H(i,k) + \Delta_{H_{f, B}}(k,0)$ represents a length of a  path 
 from node $i$ to node $0$ through node $k$ in the graph $H_{f, B}$. 
By the above,  $l_H(i,k) + \Delta_{H_{f, B}}(k,0)$ is strictly
smaller than $\Delta_{H_{f, B}}(i,0)$, the length of the shortest path from node $i$ to $0$,
a contradiction to the minimality of $\Delta_{H_{f, B}}(i,0)$.\qed 
\end{proof}

\subsection{Proof of Proposition~\ref{prop:suff-private-budget}}

We start by defining a related graph: 

\begin{definition}[The graph $\widehat{H}_{f, \; \mathcal{B}}$]
Let the {\rm node set} of the graph be 
$M(\widehat{H}_{f, \; \mathcal{B}}) = \{0, 1, 2, \ldots, |\mathcal{A}|\}$. 
Let the {\rm directed arc set} of the graph be
 ${E}(\widehat{H}_{f, \; \mathcal{B}}) = E_0 \cup E_1 \cup E_2$, where 

\[
E_{0} = \{ (i, \; k) \ | \  i, \; k > 0 \mbox{ and }  \widehat{\delta}(a_i, \; a_k) \le \theta(a_i) \},
\]
\[
E_{1} = \{ (0, \; k) \ | \  k > 0 \},
\]
\[
E_{2} = \{ (i, \; 0) \ | \  i > 0 \}.
\]

Finally, the {\it length} of the directed arc $(i,k)$ is defined as follows: 

\[ l(i, k) = \left\{ \begin{array}{ll}
        \widehat{\delta}(a_i, \; a_k) 
        & \mbox{if $(i, \;  k) \in E_{0}$}\\
        0 & \mbox{if $(i, \; k) \in E_{1}$}\\
        \theta(a_i) & \mbox{if $(i, \; k) \in E_{2}$}.\end{array} \right. \] 
\end{definition}

\begin{claim}
If  $\widehat{G}_{f, \mathcal{B}}$ contains no finite negative  cycles then
$\widehat{H}_{f, \mathcal{B}}$ contains no finite negative cycles.
\end{claim}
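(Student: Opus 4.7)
The plan is to mimic, essentially verbatim, the proof of Claim~\ref{claim:no-neg-cycle-in-G-implies-no-neg-in-H}, replacing $\delta$ by $\widehat{\delta}$ and $H_{f,\mathcal{B}},G_{f,\mathcal{B}}$ by their hatted counterparts. The structural feature that makes the argument go through is that in $\widehat{H}_{f,\mathcal{B}}$ the only arcs incident to the special node $0$ are the outgoing $E_1$ arcs (of length $0$) and the incoming $E_2$ arcs (of length $\theta(a_i)$), so in any cycle through $0$ an $E_2$ arc must be immediately followed by an $E_1$ arc.

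I would argue by contradiction: suppose $\widehat{H}_{f,\mathcal{B}}$ contains a finite negative cycle $C$. I would construct a finite cycle $C'$ in $\widehat{G}_{f,\mathcal{B}}$ (which is complete, so every arc between outcome-nodes is available) of length $l_{\widehat{G}}(C')\le l_{\widehat{H}}(C)<0$, producing the required contradiction. The construction is arc-by-arc along $C$: every $E_0$ arc $(i,k)$ is copied unchanged into $C'$, and every consecutive pair of arcs $(i,0)\in E_2$ followed by $(0,k)\in E_1$ is replaced by the single arc $(i,k)$ in $\widehat{G}_{f,\mathcal{B}}$. Because $0$ has no other kinds of incident arcs, this traversal is well defined and yields a closed walk on $\{1,\ldots,|\mathcal{A}|\}$, which we may regard as a (possibly non-simple) cycle in $\widehat{G}_{f,\mathcal{B}}$.

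To verify the length inequality I would check the two cases separately. For an $E_0$ arc, membership in $E_0$ forces $\widehat{\delta}(a_i,a_k)\le \theta(a_i)$, hence the $\widehat{G}$-length $\min\{\widehat{\delta}(a_i,a_k),\theta(a_i)\}$ equals $\widehat{\delta}(a_i,a_k)$, which is exactly the $\widehat{H}$-length of the arc. For a replaced pair, the two arcs contribute $\theta(a_i)+0=\theta(a_i)$ in $\widehat{H}$, while the single arc $(i,k)$ contributes $\min\{\widehat{\delta}(a_i,a_k),\theta(a_i)\}\le \theta(a_i)$ in $\widehat{G}$. Summing over the cycle gives $l_{\widehat{G}}(C')\le l_{\widehat{H}}(C)<0$, contradicting the hypothesis.

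I do not anticipate a genuine obstacle: the only mildly delicate point is the edge-case bookkeeping around how many times the cycle visits node $0$ and the justification that all lengths involved are finite (which follows from the $E_0$ side condition $\widehat{\delta}\le \theta$ and from $\theta(a)<\infty$ under the standing onto assumption on $f$). Everything else is formally identical to the unhatted argument.
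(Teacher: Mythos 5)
Your proposal is correct and is essentially the paper's own argument: the paper proves this hatted claim by appealing to the proof of Claim~\ref{claim:no-neg-cycle-in-G-implies-no-neg-in-H}, which is exactly the arc-by-arc construction you describe (copy $E_0$ arcs, contract each $E_2$--$E_1$ pair through node $0$ into the direct arc, and compare lengths). Your additional observation that for $E_0$ arcs the membership condition $\widehat{\delta}(a_i,a_k)\le\theta(a_i)$ makes the $\widehat{G}$-length equal (not just $\le$) to the $\widehat{H}$-length is a fine, harmless refinement of the same proof.
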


\begin{proof}
The proof is similar to the proof of Claim~\ref{claim:no-neg-cycle-in-G-implies-no-neg-in-H}.\qed 
\end{proof}

\begin{claim}\label{claim-private-B-no-cycles-in-WIDE_HAT-H-graph-implies-truth}
If $\widehat{H}_{f, \mathcal{B}}$ contains no finite negative  cycles
then $f: \mathcal{V} \times \mathcal{B} \rightarrow \mathcal{A}$ for private budgets  is implementable.
\end{claim}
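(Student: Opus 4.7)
The plan is to parallel the construction in Claim~\ref{claim-private-B-no-cycles-in-H-graph-implies-truth}, but now extracting payments from $\widehat{H}_{f, \mathcal{B}}$ rather than $H_{f, \mathcal{B}}$. Concretely, I would set
\[
p_{a_i} \;=\; \Delta_{\widehat{H}_{f, \mathcal{B}}}(i, 0),
\]
the length of the shortest path from node $i > 0$ to the special node $0$ in $\widehat{H}_{f, \mathcal{B}}$. Absence of negative cycles guarantees $\Delta_{\widehat{H}_{f, \mathcal{B}}}(i, 0) > -\infty$, and the direct arc $(i, 0) \in E_2$ of length $\theta(a_i)$ gives the upper bound $p_{a_i} \le \theta(a_i) < \infty$, using that $f$ is onto.

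The three easy constraints follow essentially as in the proof of Claim~\ref{claim-private-B-no-cycles-in-H-graph-implies-truth}. Whenever $f(v, B) = a_i$, the bound $p_{a_i} \le \theta(a_i) \le \beta(a_i) \le B$ yields budget feasibility, and $p_{a_i} \le \theta(a_i) \le \omega(a_i) \le v(a_i)$ yields individual rationality. No positive transfer follows because the $2$-cycle $0 \to i \to 0$ has length $0 + p_{a_i}$, which must be nonnegative.

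The crux is incentive compatibility. Fix a true type $(v, B)$ with $f(v, B) = a_i$ and a deviation $(v', B')$ with $f(v', B') = a_k$; it suffices to handle the case $p_{a_k} \le B$, since otherwise the deviation is not budget-feasible and IC is vacuous. The key input is that $\widehat{\delta}(a_i, a_k) \le v(a_i) - v(a_k)$ directly from the definition of $\widehat{\delta}$, which (unlike $\delta$) imposes no condition tying $B$ to $\beta(a_k)$. I would then split into two cases. If $\widehat{\delta}(a_i, a_k) \le \theta(a_i)$, then $(i, k) \in E_0$ with length $\widehat{\delta}(a_i, a_k)$, and the path $i \to k \to 0$ yields $p_{a_i} \le \widehat{\delta}(a_i, a_k) + p_{a_k} \le (v(a_i) - v(a_k)) + p_{a_k}$, which rearranges to the IC inequality. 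If instead $\widehat{\delta}(a_i, a_k) > \theta(a_i)$, then $p_{a_i} \le \theta(a_i) < \widehat{\delta}(a_i, a_k) \le v(a_i) - v(a_k)$, and combining with $p_{a_k} \ge 0$ recovers IC.

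The step I expect to require most care, relative to the no-over-reporting case, is precisely the transition between these two subcases: a deviating player may now claim $B' > B$ in order to be assigned an outcome $a_k$ with $\beta(a_k) > B$, so the earlier argument (which implicitly used $\beta(a_k) \le B$ to ensure $\delta(a_i, a_k) < \infty$) no longer applies. Switching from $\delta$ to $\widehat{\delta}$ dissolves this obstacle inside the graph, but forces $E_0$ to be restricted to arcs with $\widehat{\delta}(a_i, a_k) \le \theta(a_i)$ in order to preserve the budget-feasibility bound $p_{a_i} \le \theta(a_i)$. The second case above is what guarantees this pruning is harmless: the excluded arcs are automatically IC-safe because $p_{a_i}$ is already bounded by $\theta(a_i)$, which falls short of $v(a_i) - v(a_k)$.
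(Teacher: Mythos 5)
Your proposal is correct and follows essentially the same route as the paper: the same shortest-path payments $p_{a_i}=\Delta_{\widehat{H}_{f,\mathcal{B}}}(i,0)$, the same derivation of IR, BF and NPT from the arcs $(i,0)$ and $(0,i)$, and the same two-case analysis for IC according to whether $\widehat{\delta}(a_i,a_k)\le\theta(a_i)$ (so $(i,k)\in E_0$ and shortest-path minimality applies) or $\widehat{\delta}(a_i,a_k)>\theta(a_i)$ (where $p_{a_i}\le\theta(a_i)$ and $p_{a_k}\ge 0$ suffice). The only differences are cosmetic: you argue IC directly rather than by contradiction, and the NPT step should be phrased as appending the zero-length arc $(0,i)$ to the shortest path from $i$ to $0$ to obtain a nonnegative closed walk, rather than as a literal $2$-cycle.
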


\begin{proof}
Consider the payment $p_{a_i} = \Delta_{\widehat{H}_{f, \; \mathcal{B}}}(i,0)$,  where 
$\Delta_{\widehat{H}_{f, \; \mathcal{B}}}(i,0)$ denotes the length of the shortest 
path from node $i > 0$ to node $0$ 
in the directed graph $\widehat{H}_{f, \; \mathcal{B}}$. 
 
The first part of the proof is similar to the proof of 
Claim~\ref{claim-private-B-no-cycles-in-H-graph-implies-truth}.
In particular, it shows that $\Delta_{\widehat{H}_{f, \; \mathcal{B}}}(i,0)$  satisfies BA, IR and NPT.

It remains to show the incentive compatibility (IC). 
 Suppose to the contrary that   there exist $v, v', B, B'$ such that 
$f(v, B) = a_i,\; f(v', B') = a_k$ and 
$v(a_i) - \Delta_{\widehat{H}_{f, \; \mathcal{B}}}(i, 0) < v(a_k) - \Delta_{\widehat{H}_{f, \; \mathcal{B}}}(k, 0)$,
 where $\Delta_{\widehat{H}_{f, \; \mathcal{B}}}(k, 0) \le B$.

Now, if $(i,k) \in  E(\widehat{H}_{f, \; \mathcal{B}})$, 
then by rearranging we have that 
\[
l(i, k) + \Delta_{\widehat{H}_{f, \; \mathcal{B}}}(k, 0) = 
\widehat{\delta}(a_i, \; a_k) + \Delta_{\widehat{H}_{f, \; \mathcal{B}}}(k, 0) \le
v(a_i) - v(a_k) + \Delta(k,0) < \Delta_{\widehat{H}_{f, \; \mathcal{B}}}(i,0).
\]  

The left-hand side represents a length of a direct path 
from $i$ to $0$ (through $k$) which is 
strictly smaller than $\Delta_{\widehat{H}_{f, \; \mathcal{B}}}(i,0)$, 
the length of the shortest path from $i$ to $0$,
a contradiction to the minimality of $\Delta_{\widehat{H}_{f, \; \mathcal{B}}}(i,0)$.

Otherwise, $(i,k) \notin E(\widehat{H}_{f, \; \mathcal{B}})$, 
and thus 
$\widehat{\delta}(a_i, \; a_k) > \theta(a_i)$. By BA and IR we have  

\[
v_i(a_i) - v_i(a_k) + \Delta_{\widehat{H}_{f, \; \mathcal{B}}}(k,0) < \Delta_{\widehat{H}_{f, \; \mathcal{B}}}(i,0)
\le \theta(a_i) < 
\widehat{\delta}(a_i, \; a_k) \le v_i(a_i) - v_i(a_k).
\]

By rearranging we have that  $\Delta_{\widehat{H}_{f, \; \mathcal{B}}}(k,0) < 0$, a contradiction to NPT.\qed 
\end{proof}

\subsection{Proof of Theorem~\ref{thm:rev-equiv}}

The proof is the consequence of the following claims.  

\begin{claim}\label{claim-rev-equiv-necessary}
Let $f: \mathcal{V} \times \mathcal{B}\rightarrow \mathcal{A}$ 
be a generically implementable social choice function for private budgets.
If $f$ satisfies the revenue equivalence principle 
then  $\Delta_{G_{f, \; \mathcal{B}}}(i, k)= -\Delta_{G_{f, \; \mathcal{B}}}(k, i)$
 for all $a_i, a_k \in \mathcal{A}$ with $\beta(a_i)=\beta(a_k)$.
\end{claim}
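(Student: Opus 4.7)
The plan is to prove the contrapositive: assuming $\beta(a_i)=\beta(a_k)$ yet $\Delta_{G_{f,\mathcal{B}}}(i,k)+\Delta_{G_{f,\mathcal{B}}}(k,i)>0$ (note that the sum is $\ge 0$ by Theorem~\ref{thm:necessary-cond-private-budgets}), I would construct two dominant-strategy implementing payment schemes $(f,p)$ and $(f,p')$ for which $p_{a_i}-p'_{a_i}\neq p_{a_k}-p'_{a_k}$, contradicting revenue equivalence.

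First I would fix an implementation $p$ that generically implements $f$ at $a_i$ (which exists by hypothesis). I then partition the outcomes reachable from $a_i$ and $a_k$ into \emph{zero-sum strongly connected classes} via the equivalence relation $a_j\sim a_{j'}$ iff $\Delta_{G_{f,\mathcal{B}}}(j,j')+\Delta_{G_{f,\mathcal{B}}}(j',j)=0$. This is the ``subtle strongly-component'' structure alluded to after Theorem~\ref{thm:rev-equiv}. By the assumption, $a_i$ and $a_k$ lie in \emph{different} classes, call them $K_i$ and $K_k$, with a strictly positive distance gap between them.

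Next I would define a perturbed scheme $p'$ by leaving $p'_{a_j}=p_{a_j}$ for $a_j\in K_i$ (and for outcomes in other classes not affected) while setting $p'_{a_j}=p_{a_j}+\varepsilon$ for every $a_j\in K_k$, where $\varepsilon>0$ is chosen small enough to be absorbed by all the strict-inequality slacks guaranteed by Definition~\ref{def:Generically-implementable}. I would then verify in turn: (IR) and (BF) hold because part~(2) gives $0<p_{a_j}<\theta(a_j)$ on the class of $a_i$; (NPT) holds because the shift is upward and $p\ge 0$; and (IC) holds via a case split on arcs $(j,j')$ of $G_{f,\mathcal{B}}$ — within a class both endpoints shift equally so nothing changes, while across classes the strict slack $\Delta(i,k)+\Delta(k,i)>0$ propagates by the triangle inequality to give $p_{a_{j'}}-p_{a_j}<\delta(a_j,a_{j'})$ with room to spare, with parts~(3)–(4) of generic implementability ruling out boundary cases where the IC constraint binds or where $\beta(a_{j'})>\beta(a_j)$ could create trouble.

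Having verified $(f,p')$ is a valid implementation, the contradiction is immediate: $p_{a_i}-p'_{a_i}=0\neq-\varepsilon=p_{a_k}-p'_{a_k}$ while $\beta(a_i)=\beta(a_k)$. The main technical obstacle I expect is (IC) for cross-class arcs — classical Vohra-type constructions allow unrestricted shifts because payments can be negative, but here (NPT), (BF) and (IR) all have to be preserved simultaneously. This is exactly what each of the four clauses of Definition~\ref{def:Generically-implementable} is engineered to handle: clauses~(2) and (3) create strict interior slack at the $\beta(a_i)$ budget level, and clause~(4) ensures that the upward-crossing IC constraints (from $K_i$-outcomes to $K_k$-outcomes, where payments rise) are strict, so the $\varepsilon$-perturbation can be absorbed without any constraint binding.
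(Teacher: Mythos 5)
Your overall strategy (an $\epsilon$-perturbation of the payments on a suitably chosen set of outcomes at the budget level $\beta(a_i)$, with generic implementability supplying the slack) is the right one, but the set you choose to shift is the wrong object, and this is exactly where the argument breaks. Your classes are defined purely from the distances, $a_j\sim a_{j'}$ iff $\Delta_{G_{f,\mathcal{B}}}(j,j')+\Delta_{G_{f,\mathcal{B}}}(j',j)=0$, and are independent of the particular payment $p$; but preserving (IC) after raising $p$ by $\epsilon$ on $K_k$ requires \emph{strict} slack in every constraint of $p$ that leaves $K_k$, i.e.\ $p_{a}-p_{a'}<\delta(a,a')$ for all $a\in K_k$, $a'\notin K_k$. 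Nothing guarantees this: Definition~\ref{def:Generically-implementable} rules out binding constraints only across \emph{different} budget levels (parts~(3) and~(4)), so a tight constraint $p_{a}-p_{a'}=\delta(a,a')$ between two same-budget outcomes lying in different zero-sum classes is perfectly possible, and your ``triangle inequality propagation'' claim is false. Concretely: take two outcomes with $\beta$ equal and positive, $\delta(a,a')=5$, $\delta(a',a)=0$, $\theta(a)=\theta(a')=10$, and $p_a=6$, $p_{a'}=1$. Then $\Delta(a,a')+\Delta(a',a)=5>0$, so $a$ and $a'$ are in different classes, yet the constraint from $a$ to $a'$ is tight, and adding $\epsilon$ to the class of $a$ violates (IC). The paper avoids this by building a digraph $\overrightarrow{G}$ of \emph{tight constraints of the specific payment $p$} on the nodes $\{j:\beta(a_j)=\beta(a_i)\}$ (arc $(j,j')$ iff $p_{a_j}-p_{a_{j'}}=\delta(a_j,a_{j'})$), showing via the telescoping of these equalities against $\Delta(i,k)+\Delta(k,i)>0$ that $i$ and $k$ lie in different strongly connected components, and then shifting a \emph{sink} component $K$ (no outgoing tight arcs); the needed strictness then comes from the absence of outgoing tight arcs at the same budget level and from parts~(3)--(4) at other levels. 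Note also that the contradiction is then exhibited at \emph{some} same-budget pair $a\in K$, $a'\notin K$, not necessarily at $(a_i,a_k)$ itself (in the example above the only shiftable set is the class containing $a_i$, not $a_k$); this suffices because revenue equivalence quantifies over all pairs with equal $\beta$, whereas your insistence on producing the discrepancy exactly at $(a_i,a_k)$ cannot in general be met.

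Two further points. First, your argument silently assumes $\beta(a_i)=\beta(a_k)>0$: part~(2) of Definition~\ref{def:Generically-implementable} gives the interior slack $0<p_{a_j}<\theta(a_j)$ only when the common budget level is positive, and when $\beta(a_i)=\beta(a_k)=0$ one has $\theta(a_i)=\theta(a_k)=0$, so all such payments are forced to $0$ and no upward perturbation is feasible; in that case the claim must be proved directly, as in the paper, by observing that $l(i,k)\le\theta(a_i)=0$ and $l(k,i)\le 0$ while cycles are nonnegative by Theorem~\ref{thm:necessary-cond-private-budgets}, forcing $\Delta_{G_{f,\mathcal{B}}}(i,k)=-\Delta_{G_{f,\mathcal{B}}}(k,i)$. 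Second, the set you shift must be restricted to outcomes with $\beta$ equal to $\beta(a_i)$: your classes of ``outcomes reachable from $a_i$ and $a_k$'' may contain outcomes at other budget levels, where part~(2) gives no protection (e.g.\ an outcome with $\beta=0$ has payment $0$ and cannot be raised without violating (BF)/(IR)).
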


\begin{proof}
By Theorem~\ref{thm:necessary-cond-private-budgets},  the graph $G_{f, \; \mathcal{B}}$ contains no negative length cycles, and therefore $\Delta_{G_{f, \; \mathcal{B}}}(i, k)+ \Delta_{G_{f, \; \mathcal{B}}}(k, i) \ge 0$ for all $a_i, a_k \in \mathcal{A}$. If $\beta(a_i)=\beta(a_k)=0$, then $\theta(a_i)=\theta(a_k)=0$ (since the valuation is nonnegative) and therefore $l(i, k) \le  0$, and $l(k, i) \le 0$. Observe that none of these arcs can be negative 
(by Theorem~\ref{thm:necessary-cond-private-budgets}), and therefore $\Delta_{G_{f, \; \mathcal{B}}}(i, k) + 
 \Delta_{G_{f, \; \mathcal{B}}}(k, i) = l(i,k) +l(k, i) = 0$, as required.

Suppose, by way of contradiction, that  $\beta(a_i)=\beta(a_k) > 0$ and 
 $\Delta_{G_{f, \; \mathcal{B}}}(i, k) + 
 \Delta_{G_{f, \; \mathcal{B}}}(k, i) > 0$
 for some $a_i, a_k \in \mathcal{A}$. 
  Let $p$  be a payment function that generically implements $f$ at $a_i$. 
 We  now construct a corresponding directed graph $\overrightarrow{G}$ defined as follows. The set of nodes is $\{ j \; | \; \beta(a_j)=\beta(a_i), \; a_j \in \mathcal{A} \}$. For every pair of nodes $j, j'$, there is an arc $(j, j')$ in $\overrightarrow{G}$  if and only if $p(a_j) - p(a_{j'}) = \delta(a_{j}, a_{j'})$.  We next look at the strongly connected components of $\overrightarrow{G}$. 
 
 We first claim that $i$ and $k$ belong to distinct strongly connected components of  $\overrightarrow{G}$. Suppose not. There is thus
  a path $P$ from $i$ to $k$, and a path $Q$ from $k$ to $i$ in
   $\overrightarrow{G}$. Clearly, $P$ and $Q$ are also paths in 
   $G_{f, \; \mathcal{B}}$ (not necessarily the  shortest paths), and thus $\Delta_{G_{f, \; \mathcal{B}}}(i, k) \le l(P)$, and  $\Delta_{G_{f, \; \mathcal{B}}}(k, i) \le l(Q)$ (where $l(P), l(Q)$ denote the length of the paths $P, Q$ in the graph ${G_{f, \; \mathcal{B}}}$, respectively). In particular, $0 < l(P) + l(Q)$. 
   
   Now, for every arc $(j, j') \in \overrightarrow{G}$ 
   we have  
   $p(a_j) - p(a_{j'}) = \delta(a_{j}, a_{j'})$.
   Summing these equalities along the paths gives us 
   $0 = \Sigma_{\; (j, \; j') \in P \;} \delta(a_{j}, a_{j'}) + \Sigma_{\; (j, \; j') \in Q \;} \delta(a_{j}, a_{j'})$.  
   But since $\delta(a_{j}, a_{j'}) \ge \min \{ \delta(a_{j}, a_{j'}), \theta(a_j) \} = l(j, j')$, 
   we obtain $0 = \Sigma_{\; (j, \; j') \in P \;} \delta(a_{j}, a_{j'}) + \Sigma_{\; (j, \; j') \in Q \;} \delta(a_{j}, a_{j'}) \ge l(P) + l(Q)$, a contradiction. We conclude that the graph $\overrightarrow{G}$ has at least 2 strongly connected components.
   
Let $K$ be a strongly connected component of  $\overrightarrow{G}$ with no  arcs outgoing from a node in $K$ to a node in some other strongly connected component $K'$ of  $\overrightarrow{G}$. 
It is easy to check that every directed graph has at least one such component (e.g., since the  component graph $\overrightarrow{G}^{SCC}$ of 
$\overrightarrow{G}$ is a directed acyclic graph~\cite[Chapter 22]{CLR-Book}), so that $K$ is well defined. Based on $K$ and $p$ we define the following payment function: 

\[ p'(a)  = \left\{ \begin{array}{ll}
         p(a)+\epsilon  & \mbox {if } a \in K\\
        p(a) & \mbox{otherwise}. \\
        \end{array} \right. \]

Recall that $p$ generically 
implements $f$ at $a_i$ and so   
the payment function $p'$ satisfies IR, BF and NPT, for some small enough $\epsilon > 0$ (by Definition~\ref{def:Generically-implementable}, part~(2)).

We now show that $p'$ satisfies incentive compatibility (IC).
Let $f(v, B)=a$ and $f(v', B')=a'$. 
It suffices to show that 
$v(a) - p'(a) \ge v(a') - p'(a')$ for  every $a \in K$ and every 
$a' \in \mathcal{A}$ with $p'(a') \le B$.
Equivalently, we need to show that 
$v(a) - v(a') \ge p'(a)   - p'(a')$ for  every $a \in K$ and every 
$a' \in \mathcal{A}$ with $p'(a') \le B$.\footnote{Clearly if $a' \notin K$, $p(a') = p'(a')$ and therefore $p(a') \le B$ if and only if  $p'(a') \le B$. 
Otherwise, $p(a') + \epsilon = p'(a')$,
so that for a small enough $\epsilon > 0$,
$p'(a') < \theta(a') \le \beta(a')=\beta(a)\le B$ (by Definition~\ref{def:Generically-implementable}, part~(2)), thus 
$p(a') \le  B$
if and only if  
$p'(a') \le B$.}

We begin with the case in which $\beta(a') > \beta(a)$. 
Since $p$ generically implements $f$ at $a_i$, then 
for some small enough $\epsilon > 0$, we have that 
$p'(a)- p'(a') < v(a) - v(a')$
for every $a'$ with $\beta(a') > \beta(a)$ (by Definition~\ref{def:Generically-implementable}, part~(4)).

Otherwise, $\beta(a') \le \beta(a)$. By~(\ref{eqn:delta-private}), 
 $\delta(a, a') \le v(a) - v(a')$, and thus it suffices to show in all the remaining cases that $p'(a) - p'(a') \le \delta(a, a')$ for every $a' \in \mathcal{A}$.
 
Now, if $\beta(a') <  \beta(a)=\beta(a_i)$, then 
by Claim~\ref{claim-delta-neq-neg-infty} and 
Definition~\ref{def:Generically-implementable}, part~(3), we have that $p(a) - p(a') < \delta(a, a')$. Therefore, for some small enough $\epsilon > 0$, $p'(a) - p'(a') < \delta(a, a')$, as required.  

For the remaining case $\beta(a') = \beta(a)$, there are two subcases to consider, according to whether $a' \in K$ or not. 
In the first subcase
$p'(a) = p(a) + \epsilon$ and $p'(a') = p(a')+\epsilon$. 
By Claim~\ref{claim-delta-neq-neg-infty}, 
 $p(a) - p(a') \le \delta(a, a')$, and therefore  
$p'(a) - p'(a') = p(a) - p(a') \le \delta(a, a')$,
as required.   
 
In the other subcase, $a' \notin K$.
By Claim~\ref{claim-delta-neq-neg-infty} and the fact that $K$ has no outgoing arc in $\overrightarrow{G}$, we have  
$p(a) - p(a') < \delta(a, a')$,  and thus for a small enough $\epsilon > 0$, $p'(a) - p'(a') < \delta(a, a')$, as required. 

Finally, since $\overrightarrow{G}$ has at least 2 strongly connected components there exist $a \in K$ and $a'\notin K$ such that $\beta(a) =\beta(a')$. However, $\epsilon = p'(a) - p(a) \neq p'(a') - p(a') = 0$, contradicting the assumption that the revenue equivalence principle is satisfied.\qed 
\end{proof}

\begin{claim}\label{claim-rev-equiv-sufficieny}
Let $f: \mathcal{V} \times \mathcal{B}\rightarrow \mathcal{A}$ be an  implementable social choice function for private budgets with no budget over-reporting.
If   $\Delta_{G_{f, \; \mathcal{B}}}(i, k)= -\Delta_{G_{f, \; \mathcal{B}}}(k, i)$
 for all $a_i, a_k \in \mathcal{A}$ with $\beta(a_i)=\beta(a_k)$ then 
$f: \mathcal{V} \times \mathcal{B}\rightarrow \mathcal{A}$ 
satisfies the revenue equivalence principle.
\end{claim}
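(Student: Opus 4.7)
The plan is to mirror Vohra's shortest-path argument in the classical quasi-linear setting, adapted to accommodate the $\theta$-truncation that appears in the arcs of $G_{f, \mathcal{B}}$. Fix two implementing payment functions $p$ and $p'$, and two outcomes $a_i, a_k$ with $\beta(a_i) = \beta(a_k)$. By the taxation principle (Lemma~\ref{lemma:taxation-Principle}), each of $p, p'$ assigns a single price to each outcome, so it suffices to argue about the numbers $p_{a_i}, p_{a_k}, p'_{a_i}, p'_{a_k}$.

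The first step is to show that for any payment $p$ implementing $f$ and any ordered pair of outcomes $a_j, a_{j'}$, the inequality $p_{a_j} - p_{a_{j'}} \le l(j, j')$ holds. Claim~\ref{claim-delta-neq-neg-infty} already yields $p_{a_j} - p_{a_{j'}} \le \delta(a_j, a_{j'})$. In addition, individual rationality together with budget feasibility force $p_{a_j} \le \theta(a_j)$, while no positive transfer forces $p_{a_{j'}} \ge 0$; combining these gives $p_{a_j} - p_{a_{j'}} \le \theta(a_j)$. Taking the minimum of the two bounds delivers the desired $p_{a_j} - p_{a_{j'}} \le \min\{\delta(a_j, a_{j'}), \theta(a_j)\} = l(j, j')$.

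The second step is a straightforward telescoping sum. For any directed path $P = (i = i_0, i_1, \ldots, i_m = k)$ in $G_{f, \mathcal{B}}$,
\[
p_{a_i} - p_{a_k} \;=\; \sum_{r=0}^{m-1}\bigl(p_{a_{i_r}} - p_{a_{i_{r+1}}}\bigr) \;\le\; l(P).
\]
Taking the infimum over all paths from $i$ to $k$ gives $p_{a_i} - p_{a_k} \le \Delta_{G_{f, \mathcal{B}}}(i, k)$, and the symmetric argument yields $p_{a_k} - p_{a_i} \le \Delta_{G_{f, \mathcal{B}}}(k, i)$. Note that Theorem~\ref{thm:necessary-cond-private-budgets} guarantees these shortest-path values are finite (no negative cycles, and the graph is complete with all edge lengths finite by Claim~\ref{claim-delta-neq-neg-infty}).

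The third step invokes the hypothesis. Since $\beta(a_i) = \beta(a_k)$, the assumption $\Delta_{G_{f, \mathcal{B}}}(i, k) = -\Delta_{G_{f, \mathcal{B}}}(k, i)$ forces both inequalities to hold with equality, so $p_{a_i} - p_{a_k} = \Delta_{G_{f, \mathcal{B}}}(i, k)$. This quantity depends only on $f$ and on the fixed reports of the other players, not on the particular implementing payment $p$. Applying the identity to both $p$ and $p'$ gives $p_{a_i} - p_{a_k} = p'_{a_i} - p'_{a_k}$, which rearranges to $p_{a_i} - p'_{a_i} = p_{a_k} - p'_{a_k}$, exactly the revenue equivalence condition.

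The main obstacle I anticipate is the first step, and specifically the $\theta$-branch of the arc length. In the classical setting of~\cite{Vohra-RevEquiv} no such truncation is present, and the arc-length inequality is a direct consequence of incentive compatibility alone. Here one must explicitly combine individual rationality, budget feasibility, and no positive transfer to derive the bound $p_{a_j} - p_{a_{j'}} \le \theta(a_j)$; this is the only substantive point at which the budget structure enters the argument, after which the shortest-path telescoping is essentially identical to the classical case.
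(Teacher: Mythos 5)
Your proposal is correct and follows essentially the same route as the paper's own proof: the per-arc bound $p_{a_j}-p_{a_{j'}}\le \min\{\delta(a_j,a_{j'}),\theta(a_j)\}$ obtained from Claim~\ref{claim-delta-neq-neg-infty} together with IR, BF and NPT, then telescoping along a shortest path and using the anti-symmetry hypothesis to pin down $p_{a_i}-p_{a_k}=\Delta_{G_{f,\;\mathcal{B}}}(i,k)$ for every implementing payment. The only negligible difference is your parenthetical attribution of finiteness of arc lengths to Claim~\ref{claim-delta-neq-neg-infty}; the upper bound actually comes from $\theta(a_i)<\infty$ (since $f$ is onto), which does not affect the argument.
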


\begin{proof}
Let  $p, p'$ be payments such that the mechanisms $(f, p)$ 
and $(f, p')$ are implementable with private budgets. 
Suppose $\Delta_{G_{f, \; \mathcal{B}}}(i, k)= 
-\Delta_{G_{f, \; \mathcal{B}}}(k, i)$
 for all $a_i, a_k \in \mathcal{A}$ with $\beta(a_i)=\beta(a_k)$. 
 By Theorem~\ref{thm:necessary-cond-private-budgets} the graph $G_{f, \; \mathcal{B}}$ contains no negative length cycles, and therefore all shortest paths in this graph are finite.
 Let $P$ be a shortest path from node $i$ to node $k$ in 
 ${G_{f, \; \mathcal{B}}}$, and let $e=(j, j+1)$ be 
 some arc in $P$.
 
 By Claim~\ref{claim-delta-neq-neg-infty}, 
 $p_{a_{j}} - p_{a_{j+1}} \le \delta(a_{j}, a_{j+1})$.  
 Now, $p_{a_{j}} \le \theta(a_{j})$ (by IR and BF) and 
 $p_{a_{j+1}} \ge 0$ (by NPT) and so $p_{a_{j}} - p_{a_{j+1}} \le \theta(a_{j})$, as well. 
 Thus  $p_{a_{j}} - p_{a_{j+1}} \le \min\{\delta(a_{j}, a_{j+1}),  \theta(a_{j})\} = l(j, j+1)$.  Summing over all arcs in $P$ we have  $p_{a_{i}} - p_{a_{k}} \le l(P)=\Delta_{G_{f, \; \mathcal{B}}}(i, k)$.  Similarly, $p_{a_{k}} - p_{a_{i}} \le \Delta_{G_{f, \; \mathcal{B}}}(k, i)$. Therefore, $-\Delta_{G_{f, \; \mathcal{B}}} (k, i) \le 
 p_{a_{i}} - p_{a_{k}} \le \Delta_{G_{f, \; \mathcal{B}}}(i, k)$. 
 Since, $-\Delta_{G_{f, \; \mathcal{B}}} (k, i) = \Delta_{G_{f, \; \mathcal{B}}}(i, k)$  we have that 
 $p_{a_{i}} - p_{a_{k}} = \Delta_{G_{f, \; \mathcal{B}}}(i, k)$. By the same argument we have that $p'_{a_{i}} - p'_{a_{k}} = \Delta_{G_{f, \; \mathcal{B}}}(i, k)$. Therefore,  $p_{a_{i}} - p_{a_{k}} = p'_{a_{i}} - p'_{a_{k}}$ for $a_i, a_k \in \mathcal{A}$  with $\beta(a_i)=\beta(a_k)$, and thus $f$ satisfies the revenue equivalence principle.\qed 
\end{proof}

\subsection{Proofs for Section~\ref{Sec:example-and-application}}

\noindent{\bf Proof of Claim~\ref{claim-VCG-possibility-without-money}: }
Without loss of generality we can assume that 
there are only two possible outcomes: $\mathcal{A} = \{a, b\}$. 
Let $w(a) = (\sum_{\ell \neq \ell'} \kappa_\ell \cdot v_\ell(a)) + \gamma_a$
and $w(b) = (\sum_{\ell \neq \ell'} \kappa_\ell \cdot v_\ell(b)) + \gamma_b$. 
 
If  $w(a) = w(b)$, then  
$\Delta_{\ell'}(a, b, v_{-\ell'})= 0, \;  \Delta_{\ell'}(b, a, v_{-\ell'})=1$. 

If $w(a) + 1  = w(b)$, then 
$\Delta_{\ell'}(a, b, v_{-\ell'})= 1, \;  \Delta_{\ell'}(b, a, v_{-\ell'})=0$.

Finally, if $w(a) \ge  w(b) + 1$, then  
$\Delta_{\ell'}(a, b, v_{-\ell'})= -\infty$. Note that this is not a violation of the  
the condition in Proposition~\ref{thm-Characterization-for-zero-budgets} 
since $f$ is not onto in this case. 
 The case where $w(a) + 1 < w(b)$ is similar.\qed 

\vspace{0.4cm}

\noindent{\bf Proof of Claim~\ref{prop-VCG-impossibility-public}: }
Suppose not. Fix $f(v_1, \ldots, v_n) \in \mbox{\rm argmax}_{ \ a' \in \mathcal{A}\;} 
\{ \sum_\ell \kappa_\ell \cdot v_\ell(a') + \gamma_{a'}\}$, 
such that  $\kappa_1 > 0, \ldots, \kappa_n > 0$ and  
$\gamma_{a'} \in \bbbr$ in an arbitrary manner. 
Without loss of generality (by renaming players and outcomes if necessary), we assume that
$B_1 < \infty$ and  that $\gamma_a \ge \gamma_b $, where $a, b \in \mathcal{A}$.  
We first prove for the case of two players. 
Consider the following valuations:

\[ v_1(x)  = \left\{ \begin{array}{ll}
         B_1+1 & \mbox {if } x=a\\
        0 & \mbox{otherwise}. \\
        \end{array} \right. \] 
Additionally,  
  
\[ v_2(x)  = \left\{ \begin{array}{ll}
        \kappa_1 / \kappa_2\cdot(B_1 + 1 + L) & \mbox {if }  x=a\\
        \kappa_1 / \kappa_2 \cdot (2(B_1+1) + L ) + (\gamma_a - \gamma_b)/ \kappa_2  & \mbox {if } x=b \\ 
        0 & \mbox{otherwise}. \\
        \end{array} \right. \]

Note that for sufficiently large  $L > 0$ and every $c \in \mathcal{A}$ we have
\[       
\gamma_a + \sum_i \kappa_i v_i(a) \; = \; \gamma_b + \sum_i \kappa_i v_i(b)  \; \ge \;
\gamma_c + \sum_i \kappa_i v_i(c).
\] 
  
Now, fix $v_1, v_2$. 
If player 1 increases $v_1(b)$ by $ \epsilon > 0 $ then $b$ will be chosen, and 
therefore: $\delta_1(b, a) \le v_1(b) - v_1(a) = -(B_1 +1)$. 
Additionally, if player 1 increases $v_1(a)$ by $ \epsilon > 0 $ then $a$ will be chosen
and therefore $\theta_1(a)  = \min \ \{ B_1, \  \inf \ \{v'_1(a) \ | \ v'_1 \in \mathcal{V}_1 \mbox{ such that } 
f(v'_1, v_2) = a\} \} = \min \ \{ B_1, \ B_1 +1 \} = B_1$.

Now, $0 > \delta_1(b, a) + \theta_1(a) \ge l(b,a) + l(a, b)$
and therefore the graph $G_{f, B}$ has a negative cycle. 
By Theorem~\ref{thm:suff-private-budgets-no-over-report}, 
  $f$ is not implementable, a contradiction. 
To prove the theorem for $n > 2$ players , we can add $n-2$ players 
with zero valuations and arbitrary public budgets.\qed

\vspace{0.4cm}

\end{document}